\newtheorem{theorem}{Theorem}
\newtheorem{lemma}[theorem]{Lemma}
\newtheorem{proposition}{Proposition}
\newenvironment{proof}[1][Proof]{\begin{trivlist}
\item[\hskip \labelsep {\bfseries #1}]}{\end{trivlist}}
\newcommand{\argmax}{\mathrm{arg}\displaystyle\max}
\newcommand{\qed}{\nobreak \ifvmode \relax \else
      \ifdim\lastskip<1.5em \hskip-\lastskip
      \hskip1.5em plus0em minus0.5em \fi \nobreak
      \vrule height0.75em width0.5em depth0.25em\fi}
\newcommand{\RNum}[1]{\uppercase\expandafter{\romannumeral #1\relax}}
\newcolumntype{M}[1]{>{\centering\arraybackslash}m{#1}}
\newcolumntype{N}{@{}m{0pt}@{}}
\DeclareFontFamily{OT1}{pzc}{}
\DeclareFontShape{OT1}{pzc}{m}{it}{<-> s * [1.10] pzcmi7t}{}
\DeclareMathAlphabet{\mathpzc}{OT1}{pzc}{m}{it}
\begin{document}
\title{Innovation Representation of Stochastic Processes\\ with Application to Causal Inference}

\author{Amichai~Painsky,~\IEEEmembership{Member,~IEEE,}
        Saharon~Rosset and~Meir~Feder,~\IEEEmembership{Fellow,~IEEE}

\thanks{A. Painsky is with the Engineering and Computer Science Department, The Hebrew University of Jerusalem, Israel. Contact: amichai.painsky@mail.huji.ac.il}
\thanks{S. Rosset is with the Statistics Department, Tel Aviv University, Israel}
\thanks{M. Feder is with the Department of Electrical Engineering, Tel Aviv University,  Israel}
\thanks{The material in this paper was presented in part at the IEEE International Symposium on Information Theory (ISIT) 2013 \cite{painsky2013memoryless}.}

\thanks{This research was funded in part by Israeli Science Foundation grant 634-09 and by a grant to Amichai Painsky from the Israeli Center for Absorption in Science}
}

%
%

\markboth{IEEE TRANSACTIONS ON INFORMATION THEORY}%
{Shell \MakeLowercase{\textit{et al.}}: Bare Demo of IEEEtran.cls for Journals}
%



\maketitle

\begin{abstract}
Typically, real-world stochastic processes are not easy to analyze. In this work we  study the representation of any stochastic process as a memoryless innovation process triggering a dynamic system. We show that such a representation is always feasible for innovation processes taking values over a continuous set. However, the problem becomes more challenging when the alphabet size of the innovation is finite. In this case, we introduce both lossless and lossy frameworks, and provide closed-form solutions and practical algorithmic methods. In addition, we discuss the properties and uniqueness of our suggested approach. Finally, we show that the innovation representation problem has many applications. We focus our attention to Entropic Causal Inference, which has recently demonstrated promising performance, compared to alternative methods. 
\end{abstract}


%
\IEEEpeerreviewmaketitle

\section{Introduction}

\IEEEPARstart{C}{onsider} a time-dependent stochastic process $X^k=X_1,\dots,X_k$, and a corresponding realization $x^k=x_1,\dots,x_k$.  In this work we study an \textit{innovation representation} problem of the form 
\begin{equation}
\label{basic}
X_k \approx g(Y_k,x^{k-1}) 
\end{equation}
where $X^k$ is to be accurately described by a memoryless (independent over time) process $Y^k$ that triggers a deterministic system, $g(\cdot,\cdot)$.  We refer to $Y^k$ as the innovation process of $X^k$, as it summarizes all the new information that is injected to the process at time $k$. Therefore, the representation problem in \ref{basic} is equivalent to a sequential reconstruction of a memoryless process $Y^k$ from a given process $X^k$.

Over the years, several methods have been introduced for related problems. The Gram-Schmidt procedure \cite{arfken1985gram} suggests a simple sequential method which projects every new component on the linear span of the components that where previously observed. The difference between the current component and its projection is guaranteed to be orthogonal to all previous components. Applied to a Gaussian process, orthogonality implies statistical independence and the subsequent process is therefore considered memoryless. On the other hand, non-Gaussian processes do not hold this property and a generalized form of generating a memoryless process from any given time dependent series is required. Several non-sequential methods such as Principal Components Analysis \cite{jolliffe2002principal} and Independent Component Analysis \cite{hyvarinen1998independent,painsky2015generalized,painsky2018linear} have received a great deal of attention, but we are aware of a little previous work on sequential schemes for generating memoryless innovation processes. 

 The importance of innovation representation spans a variety of fields. One example is dynamic system analysis in which complicated time dependent processes are approximated as independent processes triggering a dynamic system (human speech mechanism, for instance). Another common example is cryptography, where a memoryless language is easier to encrypt as it prevents an eavesdropper from learning the code by comparing its statistics with those of the serially correlated language.
In Communications, Shayevitz and Feder presented the Posterior Matching (PM) principle \cite{shayevitz2011optimal}, where an essential part of their scheme is to produce statistical independence between every two consecutive transmissions. Recently,  innovation representation was further applied to causal inference \cite{kocaoglu2017entropic}. Here, given two variables $X,Y$, we say that $X$ causes $Y$ if $Y$ can be represented as mapping of $X$ and an additional ``small" independent innovation variable $E$, i.e. $Y\approx g(X,E)$. We discuss causal inference in detail in Section \ref{section_causal}. 

In this work we address the innovation representation problem in a broad perspective and introduce a general framework to construct memoryless processes from any given time-dependent process, under different objective functions and constraints. 

\section{Problem Formulation}
Let  $X^k$ be a random process, described by its cumulative distribution function $F(X^k)$. As mentioned above, we would like to construct $Y^k$ such that:
\begin{enumerate}[(a)]
\item	$F(Y^k)=\prod_{i=1}^k F(Y_i)$
\item	$X^k$ can  be uniquely recovered from $Y^k$ for any $k$. \label{uniquely_recover}
\end{enumerate}
In other words, we are looking for a sequential invertible transformation on the set of random variables  $X^k$, so that the resulting variables $Y^k$ are statistically independent.

Interestingly, we show that the two requirements can always be satisfied if we allow $Y_k$ to take values over a continuous set. However, this property does not always hold  when $Y_k$ is restricted to take values over a finite alphabet and need to be relaxed in the general case.  We discuss the continuous case in the next section, followed by a discussion on the discrete case in the remaining sections of this manuscript. 

\section{Innovation Representation for Continuous Variables}

Following the footsteps of the Posterior Matching scheme \cite{shayevitz2011optimal}, we define a generalized Gram-Schmidt method in the continuous case.  

\begin{theorem}
\label{seq_theorem_1}
Let $X \sim F_X (x)$ be a random variable and $\theta \sim \text{Unif}[0,1]$ be statistically independent of it. In order to shape $X$ to a uniform distribution (and vice versa) the following applies:
\begin{enumerate}
\item	$F_X^{-1}(\theta) \sim F_X(x)$
\item	Assume $X$ is a non-atomic distribution ($F_X (x)$ is strictly increasing) then  $F_X(X)\sim \text{Unif}[0,1]$
\item	Assume $X$ is discrete or a mixture probability distribution then  $F_X(X)-\theta P_X(X) \sim \text{Unif}[0,1]$, where $P_X(x)$ is the probability mass at the point $x$. 
\end{enumerate}
\end{theorem}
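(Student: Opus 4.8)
The plan is to treat the three claims separately, since each is a version of a classical probability-integral transform and the difficulty escalates from the first to the third. For claim (1) I would work with the generalized (left-continuous) inverse $F_X^{-1}(u)=\inf\{x:F_X(x)\ge u\}$, which is well defined even when $F_X$ has flat stretches or jumps. The workhorse is the Galois-type equivalence $F_X^{-1}(u)\le x \iff u\le F_X(x)$, valid for any CDF. Granting it, the law of $Z=F_X^{-1}(\theta)$ follows in one line: $P(Z\le x)=P(\theta\le F_X(x))=F_X(x)$, since $\theta$ is uniform on $[0,1]$. The only care needed is verifying the equivalence at the top of a jump, which is a routine consequence of the right-continuity of $F_X$.

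For claim (2) I would compute the CDF of $U=F_X(X)$ directly. For $u\in(0,1)$, strict monotonicity and continuity allow an honest inversion, $P(U\le u)=P(X\le F_X^{-1}(u))=F_X(F_X^{-1}(u))=u$, with the endpoints handled trivially. This step is short precisely because the non-atomic hypothesis excludes the complications that make claim (3) the substantive one.

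Claim (3) is the heart of the statement. The first move is to rewrite the proposed variable transparently: since $F_X(x)-P_X(x)=P(X<x)=:F_X(x^-)$, we have
$$F_X(X)-\theta P_X(X)=F_X(X^-)+(1-\theta)P_X(X),$$
and $1-\theta$ is again uniform on $[0,1]$, so this is exactly the randomized distributional transform. I would then obtain the law of $U=F_X(X^-)+(1-\theta)P_X(X)$ by conditioning on $X$ and invoking independence of $\theta$. At a continuity point the mass $P_X(X)$ vanishes and $U=F_X(X)$ contributes as in the continuous case; at an atom $a$ of mass $p_a$, conditioning on $X=a$ makes $U$ uniform on $[F_X(a^-),F_X(a)]$, an interval of length exactly $p_a$. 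The key structural observation is that these atom-intervals are pairwise disjoint and, together with the ranges produced by the continuous component, they tile $[0,1]$ with neither overlap nor gap: the jumps that $F_X(X)$ alone would skip are precisely the intervals that the randomization fills in.

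The main obstacle is assembling these pieces into a clean verification that $P(U\le u)=u$ for every $u\in[0,1]$ in the genuine mixture case, where atoms and a continuous part coexist. Concretely, I would fix $u$, split $P(U\le u)=\mathbb{E}[P(U\le u\mid X)]$ into the atoms lying entirely below level $u$, the single atom straddling level $u$ (contributing the fractional length $(u-F_X(a^-))_+\wedge p_a$), and the continuous mass, and then check that these telescope to $u$. The delicate bookkeeping is ensuring the straddling atom and the continuous contribution combine correctly at the level $u$ and that right-continuity is applied consistently; once that accounting is organized, uniformity of $U$ drops out.
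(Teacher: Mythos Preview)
Your proposal is correct: the three parts are the classical probability integral transform, its inverse, and the randomized distributional transform, and your outline handles each with the standard machinery (the Galois equivalence for the generalized inverse, direct inversion under strict monotonicity, and conditioning on $X$ for the mixed case). The rewriting $F_X(X)-\theta P_X(X)=F_X(X^-)+(1-\theta)P_X(X)$ is exactly the right way to see what is going on, and your identification of the ``tiling'' bookkeeping as the only real work in part (3) is accurate.

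As for comparison with the paper: the paper does not actually prove this theorem. It simply states that a proof is provided in Appendix~1 of \cite{shayevitz2011optimal} and moves on. So your self-contained argument is strictly more than what the paper offers; there is nothing to compare at the level of technique. If anything, your write-up could serve as the missing appendix.
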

A proof for this theorem is provided in Appendix $1$ of \cite{shayevitz2011optimal}.

Going back to our problem, define $\tilde{F}_X (x)$ as $\tilde{F}_X (x)=F_X (x)$ if $F_X (x)$ is strictly increasing and $\tilde{F}_X (x)=F_X (x)-\theta P_X (x)$ otherwise. For a desired $F_{Y_k}(y_k)$ we construct our process by setting:
\begin{equation}
Y_1=F_{Y_1}^{-1} \left(\tilde{F}_{X_1} (X_1)\right)
\end{equation}
\begin{equation}
Y_k=F_{Y_k}^{-1} \left(\tilde{F}_{X_k |X^{k-1}} (X_k |x^{k-1} )\right) \quad \forall k>1
\end{equation}
Theorem \ref{seq_theorem_1} guarantees that $\tilde{F}_{X_k |X^{k-1}} \left(X_k |x^{k-1} \right)$ is uniformly distributed and applying $F_{Y_k}^{-1}$ to it shapes it to the desired continuous distribution, $F(Y_k)$. In other words, this method suggests that for every possible history of the process at a time $k$, the transformation $\tilde{F}_{X_k |X^{k-1}} \left(X_k |x^{k-1}\right)$ shapes $X_k$ to the same (uniform) distribution. This ensures independence of its history. The method then reshapes it to the desired distribution. It is easy to see that $Y_k$ are statistically independent as every $Y_k$ is independent of $X^{k-1}$. Moreover, since $F(Y_k)$ is strictly increasing and $\tilde{F}_{X_1}(X_1)$ is uniformly distributed we can uniquely recover $X_1$ from $Y_1$ according to the construction of Theorem \ref{seq_theorem_1}. Simple induction steps show that this is correct for every $Y_k$ for  $k>1$. 

Interestingly, we show (Appendix  A) that this scheme is unique for all monotonically increasing transformations $Y_k=g(X_k,x^{k-1})$. Moreover, any non-increasing transformation that satisfies the requirements above is necessarily a measurable permutation of our scheme. A detailed discussion on these uniqueness properties is provided in Appendix A.

\section{Innovation Representation for Discrete Variables - the Lossy Case}
 Let us now assume that both $X_k$ and $Y_k$ take values on finite alphabet size of $A$ and $B$ respectively (for every $k$). Even in the simplest case, where both are binary and $X$ is a first order non-symmetric Markov chain, it is easy to see that no transformation can meet both of the requirements mentioned above. We therefore relax our problem by replacing the uniquely recoverable requirement (\ref{uniquely_recover}) with mutual information maximization of  $I\left(X_k;Y_k |X^{k-1} \right)$. This way, we make sure that the mutual information between the two processes is maximized at any time given its history. Mutual information maximization is a well-established criterion in many applications; it is easy to show that it corresponds to minimizing the logarithmic loss, which holds many desirable properties \cite{painsky2018universality,painsky2018bregman}. Notice that the case where $X_k$ is uniquely recoverable from $Y_k$ given its past, results in $ I\left(X_k;Y_k |X^{k-1} \right)$ achieving its maximum as desired.

Our problem is reformulated as follows:
for any realization of $X_k$, given any possible history the process $X^{k-1}$, find a set of mapping functions to a desired distribution $P(Y_k)$ such that the mutual information between the two processes is maximal. For example, in the binary case where $X_k$ is a first order Markov process, and $Y_k$ is i.i.d. Bernoulli distributed,
\begin{equation}
\label{markov_model}
Y_k \sim \text{Ber}(\beta_k), \quad P_{X_k}(X_k=0)=\gamma_k
\end{equation}
\begin{equation}
\nonumber
P_{X_k|X_{k-1}}\left(X_k=0|X_{k-1}=0\right)=\alpha_1
\end{equation}
\begin{equation}
\nonumber
P_{X_k|X_{k-1}}\left(X_k=0|X_{k-1}=1\right)=\alpha_2
\end{equation}
we would like to maximize
\begin{equation}
 I\left(X_k;Y_k | X^{k-1}\right)=\gamma_{k-1} I\left(X_k;Y_k | X_{k-1}=0\right)+(1-\gamma_{k-1}) I\left(X_k;Y_k | X_{k-1}=1\right).
\end{equation}
In addition, we would like to find the distribution of $Y_k$ such that this  mutual information is maximal. This distribution can be viewed as the closest approximation of the process $X^k$ as a memoryless process in terms of maximal mutual information with it. 
Notice that this problem is a concave minimization over a convex polytope-shaped set \cite{kovacevic2012hardness}, and the maximum is guaranteed on to lie on one of the polytope's vertices. Unfortunately, this problem is hard and generally there is no closed-form solution to it. Several approximations and exhaustive search solutions are available for this kind of problems, such as \cite{kuno2007simplicial}. However, there are several simple cases in which a closed-form solution exists. One notable example is the binary case.

\subsection{The Binary Case}
\label{binary_case}
Let us first consider the following problem: given two binary random variables $X$ and $Y$ and their marginal distributions $P_X (X=0)=\alpha<\frac{1}{2}$ and $P_Y (Y=0)=\beta<\frac{1}{2}$ we would like to find the conditional distributions $P_{Y|X} (y|x)$ such that the mutual information between $X$ and $Y$ is maximal. Simple derivation shows that the maximal mutual information is: \\
For $\beta>\alpha$:
\begin{equation}
\label{beta>alpha}
I_{\text{max}}^{\beta>\alpha}(X;Y)=h_b(\beta)-(1-\alpha)h_b\left(\frac{\beta-\alpha}{1-\alpha} \right).
\end{equation}
For $\beta<\alpha$:
\begin{equation}
\label{beta<alpha}
I_{\text{max}}^{\beta<\alpha}(X;Y)=h_b(\beta)-\alpha h_b\left(\frac{\beta}{\alpha} \right).
\end{equation}
Applying these results to the first order Markov process setup described above, and assuming all parameters are smaller than $\frac{1}{2}$, we get that the maximal mutual information is simply:\\

\noindent For $\beta_k<\alpha_1<\alpha_2$:
\begin{equation}
 I\left(X_k;Y_k | X^{k-1}\right)=\gamma_{k-1} I_{\text{max}}^{\beta<\alpha_1}\left(X;Y\right)+(1-\gamma_{k-1}) I_{\text{max}}^{\beta<\alpha_2}\left(X;Y\right).
\end{equation}

\noindent For $\alpha_1 \leq \beta_k <\alpha_2$:
\begin{equation}
 I\left(X_k;Y_k | X^{k-1}\right)=\gamma_{k-1} I_{\text{max}}^{\beta>\alpha_1}\left(X;Y\right)+(1-\gamma_{k-1}) I_{\text{max}}^{\beta<\alpha_2}\left(X;Y\right).
\end{equation}

\noindent For $\alpha_1 <\alpha_2 \leq \beta_k $:
\begin{equation}
 I\left(X_k;Y_k | X^{k-1}\right)=\gamma_{k-1} I_{\text{max}}^{\beta>\alpha_1}\left(X;Y\right)+(1-\gamma_{k-1}) I_{\text{max}}^{\beta>\alpha_2}\left(X;Y\right).
\end{equation}

It is easy to verify that $ I\left(X_k;Y_k | X^{k-1}\right)$ is continuous in $\beta_k$. Simple derivation shows that for $\beta_k<\alpha_1<\alpha_2$ the maximal mutual information is monotonically increasing in $\beta_k$ and for $\alpha_1 <\alpha_2 \leq \beta_k $ it is monotonically decreasing in $\beta_k$. It can also be verified that all optimum points in the range of  $\alpha_1 \leq \beta_k <\alpha_2$ are local minima which leads to the conclusion that the maximum must be on the boundary of the range, $\beta_k=\alpha_1$ or $\beta_k=\alpha_2$.  The details of this derivation are located in Appendix B. For example, Figure \ref{illustration} illustrates the shape of  $ I\left(X_k;Y_k | X^{k-1}\right)$ as a function of $\beta_k$, for $\alpha_1=0.15$, $\alpha_2=0.45$.

Since we are interested in the $\beta_k$ that maximizes the mutual information between the two possible options, we are left with a simple decision rule
\begin{equation}
\label{decision_rule}
\gamma_{k-1}\begin{array}{c} 
\beta_k=\alpha_2\\ \lessgtr \\\beta_k=\alpha_1\end{array} \frac{h_b(\alpha_2)-h_b(\alpha_1)+\alpha_2 h_b\left(\frac{\alpha_1}{\alpha_2}\right)}{\alpha_2 h_b \left(\frac{\alpha_1}{\alpha_2}\right)+(1-\alpha_1)h_b \left(\frac{\alpha_2-\alpha_1}{1-\alpha_1}\right)}
\end{equation}
which determines the conditions according to which we choose our $\beta_k$, depending on the parameters of the problem $\gamma_{k-1}, \alpha_1, \alpha_2$.\\

\begin{figure}[h]
\centering
\includegraphics[width = 0.4\textwidth,bb= 80 440 410 710,clip]{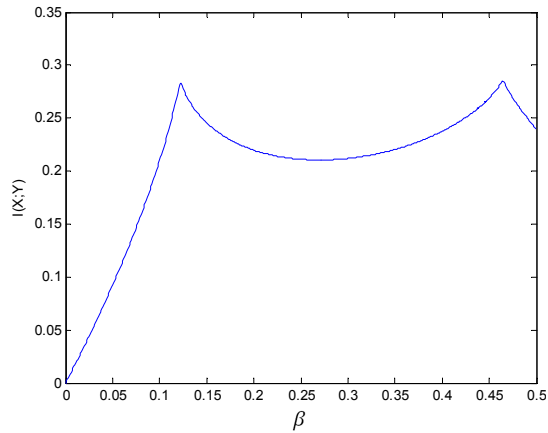}
\caption{$ I\left(X_k;Y_k | X^{k-1}\right)$ as a function of $\beta_k$, for a first order Markov model (\ref{markov_model}), with $\alpha_1=0.15$, $\alpha_2=0.45$}
\label{illustration}
\end{figure}

Further, assuming that the process $X$ is at its stationary state yields that $\beta_k$ is fixed for every $k$ and $\gamma=\frac{\alpha_2}{1-\alpha_1+\alpha_2}$. Applying this result to the decision rule above (\ref{decision_rule}), it is can be verified (Appendix B) that for $\alpha_1<\alpha_2<\frac{1}{2}$  we have:
\begin{equation}
\nonumber
\frac{\alpha_2}{1-\alpha_1+\alpha_2}< \frac{h_b(\alpha_2)-h_b(\alpha_1)+\alpha_2 h_b\left(\frac{\alpha_1}{\alpha_2}\right)}{\alpha_2 h_b \left(\frac{\alpha_1}{\alpha_2}\right)+(1-\alpha_1)h_b \left(\frac{\alpha_2-\alpha_1}{1-\alpha_1}\right)}
\end{equation}
which leads to the conclusion that $\beta_{opt}=\alpha_2$.

The derivation above is easily generalized to all values of $\alpha_1$ and $\alpha_2$. This results in a decision rule stating that $\beta_{opt}$ equals the parameter closest to $\frac{1}{2}$:
\begin{equation}
\beta_{opt}=\argmax_{\theta \in \{\alpha_1,\alpha_2,1-\alpha_1,1-\alpha_1\}}\left(\frac{1}{2}-\theta\right).
\end{equation}
In other words, in order to best approximate a binary first order Markov process at its stationary state we set the distribution of the binary memoryless process to be similar to the conditional distribution which holds the largest entropy. Generalizing this result to an $r$-order Markov process we have $R=2^r$ Bernoulli distributions to be mapped to a single one, $P(Y_k)$. The maximization objective is therefore  
\begin{equation}
 I\left(X_k;Y_k | X^{k-1}\right)=\sum_{i=0}^{R-1} \gamma_i I\left( X_k;Y_k | \left[ X_{k-1}\, \dots \, X_{k-R-1}\right]^T=i\right)
\end{equation}
where  $\gamma_i \triangleq P\left(\left[X_{k-1}\,\dots \,X_{k-R-1}\right]^T=i\right)$. Notice that $I\left( X_k;Y_k | \left[ X_{k-1}\, \dots \, X_{k-R-1}\right]^T=i\right)$ is either $h_b(\beta)-\alpha_i h_b \left(\frac{\beta}{\alpha_i}\right)$ or $h_b(\beta)-(1-\alpha_i) h_b \left(\frac{\beta-\alpha_i}{1-\alpha_i}\right)$, depending on $\beta$ and $\alpha_i$,  as described in (\ref{beta>alpha}--\ref{beta<alpha}). Simple calculus shows that as in the $R=2$ case, the mutual information $ I\left(X_k;Y_k | X^{k-1}\right)$ reaches its maximum on one of the inner boundaries of $\beta$'s range
\begin{equation}
\beta_{opt}=\argmax_{\beta \in \{\alpha_j\}} \left(h_b(\beta)-\sum_{\beta<\alpha_j} \gamma_j \alpha_j h_b\left(\frac{\beta}{\alpha_j}\right)-\sum_{\beta>\alpha_j} \gamma_j (1-\alpha_j) h_b\left(\frac{\beta-\alpha_j}{1-\alpha_j}\right)\right).
\end{equation}
Unfortunately, here it is not possible to conclude that $\beta$ equals the parameter closest to $\frac{1}{2}$, as a result of the nature of our concave minimization problem.  

\section{Innovation Representation for Discrete Variables - the Lossless Case}
\label{section_lossless}
The lossy approximation may not be adequate in applications where unique recovery of the original process is required. It is therefore necessary to increase the alphabet size of the output so that every marginal distribution of $X_k$, given any possible history of the process, is accommodated.  
This problem can be formulated as follows: 

Assume we are given a set of $R$ random variables, $\{X_i\}_{i=1}^R$, such that each random variable $X_i$ is multinomial distributed, taking on $A$ values,  $X_i \sim \text{multnom}\left(\alpha_{1i},\alpha_{2i},\dots,\alpha_{Ai}\right)$. Notice that $X_i$ corresponds to $X_k|[X_{k-1},\dots,X_{k-R-1}]^T=i$, as appears in the previous section, and $A$ is the marginal alphabet size of the original process $X_k$. Using the notation from previous sections, we have that $P(X_i)$ corresponds to $P(X_k | \left[ X_{k-1}\, \dots \, X_{k-R-1}\right]^T=i)$. In addition, we denote $x_{a}$ as the $a^{th}$ symbol of random variable $X_i$. We would like to find a distribution $Y \sim \text{multnom}\left(\beta_1,\beta_2,\dots,\beta_B\right)$  where the $\beta$'s and alphabet size $B \geq A$ are unknown. In addition, we are looking for $R$ sets of conditional probabilities between every possible realization $X_i=x_{a}$ and $Y$, such that $X_i=x_{a}$ can be uniquely recoverable from $Y=y_b$, for every $i,a$ and $b$. Further, we would like the entropy of $Y$ to be as small as possible so that our memoryless process is as ``cheap" as possible to describe. Since the transformation is invertible, we have that $H(X_k|X^{k-1}=x^{k-1})=H(Y_k)$, for every $k$. This means that by minimizing $H(Y_k)$ we actually minimize $H(X_k|X^{k-1}=x^{k-1})$. In other words, our problem may be viewed as follows: given a set of $R$ marginal distributions $X_i \sim \text{multnom}\left(\alpha_{1i},\alpha_{2i},\dots,\alpha_{Ai}\right)$ for $i={1,\dots,R}$, we are looking for a joint probability distribution such that the joint entropy is minimal.  This problem was recently introduced as \textit{Minimum Entropy Coupling} \cite{kocaoglu2017entropic} and shown to be NP hard.

Without loss of generality we assume that $\alpha_{ai} \leq \alpha_{(a+1)i}$ for all $a \leq A$, since we can always order them this way. We also order the sets according to the smallest parameter,  $\alpha_{1i} \leq \alpha_{1(i+1)}$. Notice we have $\alpha_{1i}\leq \frac{1}{2}$ for all $i=1,\dots, R$ , as an immediate consequence. 

 For example, for $A=2$ and $R=2$, it is easy to verify that $B \geq 3$ is a necessary condition for $X_i$ to be uniquely recoverable from $Y$. Simple calculus shows that the conditional probabilities which achieve the minimal entropy are $\beta_1=\alpha_1,\,  \beta_2=\alpha_2-\alpha_1$ and $\beta_3=1-\alpha_2$, as appears in Figure \ref{lossless_matching} for $\alpha_1\leq \alpha_2 \leq \frac{1}{2}$. 

\begin{figure}[h]
\centering
\includegraphics[width = 0.75\textwidth,bb= 50 520 540 720,clip]{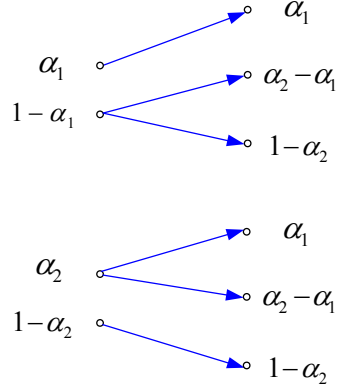}
\caption{Optimal lossless representation of two binary sources with a single ternary source}
\label{lossless_matching}
\end{figure}

\subsection{Minimizing $B$}
\label{minimizing_b_section}
Let us start by finding the minimal alphabet size of the output process $B$, such that   $X^k$ is guaranteed to be uniquely recoverable from it.
Looking at the free parameters of our problem, we first notice that defining the distribution of $Y$ requires exactly $B-1$ parameters. Then, defining $R$ conditional probability distributions between each alphabet size $A$ and the output process $Y$ takes $R(A-1)(B-1)$ parameters. For $X_i$ to be uniquely recoverable from $Y$, each value of $Y$ needs to be assigned to at most a single value of $X_i$ (see Figure \ref{lossless_matching} for example). This means that for each of the $R$ sets, we have $B(A-1)$ constraints ($B$ possible realizations of $Y$, each of them has $A-1$ zero conditional probability constraints). Therefore, in order to have more free parameters than constraints we require that:
\begin{equation}
(B-1)+R(A-1)(B-1) \geq RB(A-1).                          
\end{equation}
Rearranging this inequality leads to
\begin{equation}
B \geq R(A-1)+1.                    	
\end{equation}

For example, assuming the $X_i$ takes over a binary alphabet, we get that $B\geq R+1$. There exist several special cases in which it is possible to go under this lower bound, like cases where some parameters are additions or subtraction of other parameters. For example, $\alpha_2=1-\alpha_1$ in the binary case. Our derivation focuses on the most general case. Notice that a similar result appears in Lemma 3 of \cite{kocaoglu2017entropic}. However, it is important to emphasize that an earlier conference version of our results was already published in \cite{painsky2013memoryless} and \cite{painsky2018phd}, several years before \cite{kocaoglu2017entropic}. 

\subsection{The Optimization Problem}
The problem above can be formulated as the following optimization problem:
\begin{equation}
\min H(Y) \quad \text{s.t.} \quad H(X_k | Y = y_b,\; [X_{k-1},\dots,X_{k-R-1}]^T=i) = 0 \quad \forall i =1,\dots, R,\; b=1,\dots,B
\end{equation}
Unfortunately this is a concave minimization problem over a non-convex set. However, we show that this problem can also be formulated as a mixed integer problem.

\subsection{Mixed Integer Problem Formulation}
\label{mixed_int}
In order to formulate our problem as a mixed integer problem we first notice that the free parameters are all conditional probabilities, as they fully determine the outcome distribution. We use the notation $p_{iab}$ to describe the conditional probability $P(Y=y_b | X_i=x_{a})$. The equality constraints we impose on our minimization objective are as follows:

\begin{itemize}
\item		All $R$ conditional probability sets must result in the same output distribution:
\begin{align}
\nonumber
P(Y=y_b)=&\sum_{a=1}^A P\left(Y=y_b | X_i=x_{a}\right) P\left( X_i=x_{a}\right)=\sum_{a=1}^A p_{iab} \alpha_{ai}
\end{align}
for all $i=1,\dots,R$ and $b=1,\dots,B$. Since the parameters $\alpha_{1i},\dots,\alpha_{Ai}$ are given, we have that 
\begin{align}
\nonumber
\sum_{a=1}^A p_{iab}\alpha_{ai}-\sum_{a=1}^A p_{jab}\alpha_{aj}=0
\end{align}
for all $i,j=1,\dots,R$ and $b=1,\dots,B.$

\item	$P(Y|X_i)$ is a valid conditional distribution function:
\begin{align}
\nonumber
\sum_{b=1}^A p_{iab}=1 
\end{align}
for all $i=1,\dots,R$ and $a=1,\dots,A.$
\end{itemize}

In addition, the inequality constraints are:
\begin{itemize}
\item	For convenience, we ask that $P(Y=y_b ) \leq P(Y=y_{b+1} )$ for all $b=1,\dots,B$:
\begin{align}
\nonumber
\sum_{a=1}^A p_{iab}\alpha_{ai}-\sum_{a=1}^A p_{ia(b+1)}\alpha_{ai} \leq 0 \quad \text{for all} \quad 1 \leq b \leq B.
\end{align}

\item Zero conditional entropy constraint:
as stated above, a necessary and sufficient condition for zero conditional entropy is that for every value $Y=y_b$, in every set $i=1,\dots,R$, there is only a single value $X_i=x_{a}$ such that $p_{iab}>0$.
Therefore, for each of the $R$ sets, and for each of the $B$ values $Y$ can take on, we define $A$ boolean variables, $T_{iab}$, that  satisfy:
\begin{equation}
\nonumber
p_{iab}-T_{iab} \leq 0, \quad \sum_{a=1}^A T_{iab} =1, \quad T_{iab} \in \{0,1\}.
\end{equation}
Notice that the summation ensures only a single $T_{iab}$ equals one, for which $p_{iab}\leq1$. For each of the other $T_{iab}=0$ the inequality constraint verifies that $p_{iab}\leq0$.
This set of constraints can also be written using $A-1$ Boolean variables:
\begin{equation}
\nonumber
p_{iab}-T_{iab} \leq 0 \quad  \forall \,\, a=1,\dots,A
\end{equation}
\begin{equation}
\nonumber
p_{iAb}-\left(1-\sum_{a=1}^{A-1}T_{iab}\right) \leq 0 \quad \Leftrightarrow \quad p_{iAb}+\left(\sum_{a=1}^{A-1}T_{iab}\right) \leq 1 
\end{equation}
\begin{equation}
\nonumber
T_{iab} \in \{0,1\} \quad \forall \,\, a=1,\dots,A.
\end{equation}

\end{itemize}

Therefore, our minimization problem can be written as follows:
define a vector of parameters $z=\left[p_{iab} \quad T_{iab}\right]^T$. 
Define $A_{eq}$ and $b_{eq}$ as the equality constraints in a matrix and vector forms respectively.
Define $A_{ineq}$ and $b_{ineq}$ as the inequality constraints in a matrix and vector forms respectively. This leads to
\begin{equation}
\label{f(z)}
\min f(z) 
\end{equation}
\begin{equation}
\nonumber
\text{s.t.} \quad A_{eq}z=b_{eq}
\end{equation}
\begin{equation}
\nonumber
\quad \quad \quad \;\; A_{ineq}z\leq b_{ineq}
\end{equation}
\begin{equation}
\nonumber
\quad   \;\; 0\leq z \leq 1
\end{equation}
\begin{equation}
\nonumber
\quad    \quad  \quad  \quad \quad \quad \;\; T_{iab} \in \{0,1\} \quad \forall i,a,b
\end{equation}
where f(z) is the entropy of the random variable $Y$ in terms of $p_{iab}$ and \textit{boolean indicators} define which elements in $z$ correspond to $T_{iab}$.
Mixed integer problems are studied broadly in the computer science community. There are well established methodologies for convex minimization in a mixed integer problem and specifically in the linear case \cite{floudas1995nonlinear,tawarmalani2004global}. The study of non-convex optimization in mixed integer problems is also growing quite rapidly, though there is less software available yet. The most broadly used mixed integer optimization solver is CPLEX, developed by IBM. CPLEX provides a mixed integer linear programming (MILP) solution, based on a branch and bound oriented algorithm. We use the MILP in lower bounding our objective function (\ref{f(z)}) as described in the following sub-sections.

\subsection{Greedy Solution}
\label{greedy_section}
The entropy minimization problem can also be viewed as an attempt to minimize the entropy of a random variable $Y \sim \text{multinom}\left(\beta_1,\beta_2,\dots,\beta_B \right)$ on a set of discrete points representing valid solutions to the problem we defined. 
Let us remember that $\beta_b\leq \beta_{b+1}$ for all $b=1,\dots,B$ as stated in the previous sections. 
\begin{proposition}
\label{propprop}
$\beta_B$ is not greater than $\min_i\{ \alpha_{Ai}\}$
\end{proposition}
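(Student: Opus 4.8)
The plan is to read the claim off directly from the unique-recoverability (zero conditional entropy) constraint, which imposes a rigid combinatorial structure on how the mass of $Y$ is allocated. First I would fix an arbitrary index $i \in \{1,\dots,R\}$ and recall that, by the constraint $H(X_k \mid Y=y_b,\,[X_{k-1},\dots,X_{k-R-1}]^T=i)=0$, every outcome $y_b$ admits exactly one symbol $x_a$ of $X_i$ with $p_{iab}>0$. Equivalently, the rule that sends $y_b$ to the unique symbol it can originate from partitions the outcomes $\{y_1,\dots,y_B\}$ into blocks indexed by the symbols of $X_i$, where the block associated with $x_a$ carries total probability $\sum_{b:\, y_b \mapsto x_a} \beta_b = \alpha_{ai}$ (this uses the marginal constraint $\beta_b=\sum_{a}p_{iab}\alpha_{ai}$ together with $\sum_b p_{iab}=1$).

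With this structure in place, the bound for a fixed $i$ is immediate. The outcome $y_B$, whose probability $\beta_B$ is the largest under the ordering convention $\beta_b \le \beta_{b+1}$, lies in exactly one block, say the one associated with $x_{a^{\ast}}$. Since every $\beta_b$ is nonnegative, $\beta_B$ cannot exceed the total mass of its own block, so
\begin{equation}
\nonumber
\beta_B \le \sum_{b:\, y_b \mapsto x_{a^{\ast}}} \beta_b = \alpha_{a^{\ast} i} \le \max_{a} \alpha_{ai} = \alpha_{Ai},
\end{equation}
where the last equality invokes the ordering convention $\alpha_{ai}\le \alpha_{(a+1)i}$. The same conclusion can be reached purely algebraically: for $b=B$ the marginal constraint $\beta_B=\sum_a p_{iaB}\alpha_{ai}$ collapses to the single surviving term $p_{i a^{\ast} B}\,\alpha_{a^{\ast} i}$, and $p_{i a^{\ast} B}\le 1$ because it is a conditional probability.

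Finally, since the inequality $\beta_B \le \alpha_{Ai}$ holds for an arbitrary choice of $i$, I would take the minimum over all $R$ sets to obtain $\beta_B \le \min_i\{\alpha_{Ai}\}$, which is the stated claim. I do not expect a genuine obstacle here: the only step that requires care is the correct translation of the zero conditional entropy constraint into the block/partition picture. Once that identification is made, the result follows purely from the nonnegativity of the $\beta_b$ and the two ordering conventions, with no optimization, calculus, or case analysis involved.
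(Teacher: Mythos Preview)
Your proof is correct and takes essentially the same approach as the paper: both arguments hinge on the observation that the zero conditional entropy constraint forces, for each $i$ and each $b$, at most one symbol $x_a$ with $p_{iab}>0$, so the marginal identity $\beta_B=\sum_a p_{iaB}\alpha_{ai}$ reduces to a single term bounded by $\alpha_{Ai}$. The paper phrases this by contradiction (if $\beta_B>\min_i\alpha_{Ai}$ then two terms must survive), while you argue directly via the block/partition picture; these are the same idea in contrapositive form, and your presentation is somewhat more detailed but not substantively different.
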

\begin{proof}
Assume $\beta_B>\min_i\{ \alpha_{Ai}\}$. Then, for this $i$ there must be at least two values $x_{u}$ and $x_{v}$ for which $p_{iub}>0$ and $p_{ivb}>0$. This contradicts the zero conditional entropy constraint. \hfill $\square$
\end{proof}
For example, Figure \ref{lossless_matching} demonstrates the optimal lossless solution for $R=2$, and $B=3$. We see that $\beta_B=\min\{1-\alpha_1, 1-\alpha_2\}$ (for $\alpha_1\leq \alpha_2 \leq \frac{1}{2}$). Moreover, it is easy to verify that $\beta_B=1-\alpha_2$ is not a feasible solution, as Proposition \ref{propprop} suggests.

Therefore, a greedy algorithm would  like to “squeeze” all the distribution to the values which are less constrained from above, so that it is as large as possible. 

Our suggested algorithm works as follows: first set $B$ according to the bound presented in Section \ref{minimizing_b_section}. Then, in every step of the algorithm we set $\beta_B=\min_i\{ \alpha_{Ai}\}$. This leaves us with a $B-1$ problem (of setting the remaining values of $\beta_1,\dots,\beta_{B-1}$ ). Given this mapping, we rearrange the remaining probabilities $\alpha_{ai}$ and repeat the previous step. We terminate once we set the smallest value, $\beta_1$. This process ensures that in each step we increase the least constrained  value of $\beta$ as much as possible. 

We notice that the same greedy algorithm was recently introduced and studied by Kocaoglu et al. \cite{kocaoglu2017entropic} in the context of causal inference. However, as stated in Section \ref{section_lossless}, an earlier conference version of our results, including this algorithm, was already published in \cite{painsky2013memoryless} (and in a more detail in \cite{painsky2018phd}) several years before \cite{kocaoglu2017entropic}.

\subsection{Lowest Entropy Bound}
\label{lowest_section}
As discussed in the previous sections, we are dealing with an entropy minimization problem over a discrete set of valid solutions. Minimizing the entropy over this set of points can be viewed as a mixed integer non-convex minimization, which is a hard problem. 
In this section we introduce a lower bound to the optimal solution by relaxing the domain of solutions to a continuous set. In other words, instead of searching for $\beta$'s over a discrete set (as a result of the requirement for $R$ invertible mappings), we now search for $\beta$'s over a continuous set, which naturally includes additional unfeasible solutions.  We would like to consider the smallest continuous set that includes all feasible solutions. For this purpose, we find an upper and lower bound for each of the parameters  $\beta_b$ and solve the problem when $\beta_b$ may take any value in this range. This way, we relax the search over a set of valid solutions to a search in a continuous space, bounded by a polytope, and attain a lower bound to the desired entropy. 

We find the boundaries for each $\beta_b$ by changing our minimization objective to a simpler linear one (minimize/maximize $\beta_b$, for every $b=1,\dots,B$ at a time). This problem is a simple MILP as shown above.  By looking at all these boundaries together we may minimize the entropy in this continuous space $\beta_{b,min}\leq \beta_b \leq \beta_{b,max}$ and find a lower bound for the minimal entropy one can expect. Theorem \ref{theorem_lb} states the constructive conditions for an optimal solution in this case. 

We notice that this bound is not tight, and we even do not know how far it is from a valid minimum, as it is not necessarily a valid solution. However, it gives us a benchmark to compare our greedy algorithm against and decide if we are satisfied with it, or require more powerful tools. We also note that as $B$ increases, the number of valid solutions grows exponentially. This leads to a more packed set of solutions which tightens the suggested lower bound as we converge to a polytope over a continuous set.

\begin{theorem}
\label{theorem_lb}
Let $Y$ be a random variable over multinomial distribution, $Y\sim \text{multnom}(\beta_1,\beta_2,\dots,\beta_B )$. Assume that parameters $\beta_i$ satisfy:
\begin{enumerate}
\item $a_i \leq \beta_i \leq b_i$ for all $i=1,\dots,B$.
\item $\sum a_i \leq 1$ and $\sum\beta_i \geq 1$ (to ensure the existence of a feasible solution).
\item $a_i\leq a_{i+1}$ and   $b_i\leq b_{i+1}$ for all $i=1,\dots,B$.
\end{enumerate}
Then, the minimal entropy is achieved by 
\begin{enumerate}
\item $\beta_i =b_i$ for all $i>k$.
\item $\beta_i =a_i$ for all $i<k$.
\item $\beta_k =1-\sum_{i\neq k} \beta_i$.
\end{enumerate}
for some $k>0$. 
\end{theorem}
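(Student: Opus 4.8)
The plan is to exploit the two structural facts that (i) the objective $H(Y)=-\sum_{i=1}^{B}\beta_i\log\beta_i$ is concave and (ii) the feasible set $\{\beta:\ a_i\le\beta_i\le b_i,\ \sum_i\beta_i=1\}$ is a compact convex polytope, namely a box intersected with the probability simplex. A minimizer therefore exists, and since a concave function attains its minimum over a polytope at an extreme point, I may take the minimizer to be a vertex, at which at most one coordinate lies strictly inside its box $[a_i,b_i]$ while all others sit at $a_i$ or $b_i$. Rather than rely on first-order (KKT) conditions --- too weak here, since a concave minimization typically has many stationary vertices --- I would pin down the structure by a pairwise \emph{exchange} argument along the edges of the polytope, which in fact needs only that $\beta^\star$ is a global minimizer.

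The key step is the following. Fix a minimizer $\beta^\star$ and any two indices $p<q$, and vary only $\beta_p,\beta_q$ while holding their sum $s=\beta_p+\beta_q$ and all other coordinates fixed; this is a feasible one-parameter family (an edge segment) inside the polytope. Along it the objective reduces to $\varphi(\beta_p)=h(\beta_p)+h(s-\beta_p)$ with $h(x)=-x\log x$, which is concave and symmetric about $s/2$, hence decreasing in the spread $|\beta_p-\beta_q|$. Consequently $\beta^\star$ must sit at the endpoint of the feasible interval $\beta_p\in[\max(a_p,s-b_q),\ \min(b_p,s-a_q)]$ that \emph{maximizes} the spread. Here the sorted-bounds hypotheses $a_p\le a_q$ and $b_p\le b_q$ do the work: because $s-a_p\ge s-a_q$ and $b_q\ge b_p$, the largest attainable single value, $\min(s-a_p,b_q)$, is achieved by loading the higher-indexed coordinate $q$, and it dominates $\min(s-a_q,b_p)$, the best one can do by loading $p$. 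Thus the minimizing endpoint is $\beta_p=\max(a_p,s-b_q)$, which says precisely that \textbf{either $\beta_q=b_q$ or $\beta_p=a_p$} for every pair $p<q$.

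It then remains to assemble these pairwise conditions into the global form. Call a coordinate \emph{full} if $\beta_i=b_i$ and \emph{raised} if $\beta_i>a_i$, and let $k$ be the largest index that is not full (if every coordinate is full we are in a boundary case with $\sum_i b_i=1$, handled directly). Every $i>k$ is full, giving $\beta_i=b_i$. Since $k$ is not full, the pairwise condition applied to each pair $(i,k)$ with $i<k$ forbids $i$ from being raised, forcing $\beta_i=a_i$ for all $i<k$. The remaining coordinate is then determined by normalization, $\beta_k=1-\sum_{i\ne k}\beta_i$, and lies in $[a_k,b_k)$ by feasibility. This is exactly the claimed structure, so Theorem \ref{theorem_lb} follows.

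The main obstacle I anticipate is twofold. First, the comparison of endpoints must be carried out carefully, as the crux is the elementary but easy-to-mishandle inequality $\min(s-a_p,b_q)\ge\min(s-a_q,b_p)$ together with the observation that, for fixed sum, the pair entropy is monotone in the peak value --- this is where the \emph{ordering} of the bounds, and not merely the box constraints, is indispensable. Second, one must treat ties and degeneracies cleanly: when the two endpoints give equal entropy, or when $a_i=b_i$, or at the boundary cases $\sum_i a_i=1$ or $\sum_i b_i=1$, the argument yields ``a minimizer \emph{can be chosen} of the stated form'' rather than uniqueness, and the index $k$ may fall at an extreme. These cases are routine but should be stated explicitly so that the clean threshold description with a single free coordinate $k$ is seen to hold in every instance.
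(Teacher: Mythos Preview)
Your argument is correct and rests on the same two-variable exchange idea as the paper's proof in Appendix~C: for any pair of coordinates with fixed sum, the concavity and symmetry of $h(x)+h(s-x)$ together with the ordered bounds $a_p\le a_q$, $b_p\le b_q$ force the minimizer to push mass onto the higher-indexed coordinate. The paper, however, deploys this idea \emph{inductively}: it first handles the binary case by explicit case analysis (separately for one-sided bounds in Proposition~\ref{appendix3_prop1} and then for two-sided bounds), and for general $N$ uses the grouping decomposition $H(X)=H_b(\lambda)+\lambda\,H(\gamma_i/\lambda,\gamma_N/\lambda)+(1-\lambda)\,H(\text{rest})$ to conclude that $\gamma_N$ must sit at its feasible maximum $\min\{b_N,\,1-\sum_{i\ne N}a_i\}$, after which it recurses on the remaining $N-1$ parameters. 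Your route is more direct: you run the pairwise exchange for \emph{all} pairs $p<q$ at once, extract the clean combinatorial condition ``$\beta_q=b_q$ or $\beta_p=a_p$'', and read off the threshold structure in one step via the largest non-full index $k$. This sidesteps the paper's rather laborious case distinctions in the two-sided binary analysis and makes the role of the ordering hypotheses transparent through the single inequality $\min(s-a_p,b_q)\ge\min(s-a_q,b_p)$; the paper's inductive organization, on the other hand, yields the explicit value of $\beta_N$ at each stage, which is handy if one wants a constructive algorithm rather than a structural characterization. Your caveat about ties is well placed: selecting, among all minimizers, the lexicographically largest one (largest $\beta_B$, then $\beta_{B-1}$, etc.) makes the pairwise condition hold simultaneously for every pair and cleanly disposes of the degenerate cases.
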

A proof for this theorem is provided in Appendix C.

\section{Applications}
As mentioned above, the innovation representation problem has many applications in a variety of fields. In this section we focus on two important applications. We begin with causal inference, as recently introduced by kocaoglu et al. \cite{kocaoglu2017entropic}.

\subsection{Causal Inference}
\label{section_causal}
Consider two random variables $X$ and $Y$. We say that $X$ \textit{causes} $Y$ (denote as $X\rightarrow Y$) if a change in the value of $Y$ is a consequence of a change in the value of $X$ (and vice versa). A general solution to the causal inference problem is to conduct experiments, also called interventions (for example, \cite{shanmugam2015learning}). For many problems, it can be very difficult to create interventions since they require additional experiments after
the original data-set was collected. Nevertheless, researchers would still like to discover causal relations between variables using only observational data, using so-called data-driven causality. 
However, a fundamental problem in this approach is the symmetry of the underlaying distribution; the joint distribution $P(x,y)$ may be factorized as either $P(x)P(y|x)$ or $P(y)P(x|y)$. This means that we cannot infer the causal direction directly from the joint distribution, and additional assumptions must be made about the mechanisms that generate the data \cite{peters2011causal}. 

The most popular assumption for two-variable data-driven causality is the additive noise model (ANM) \cite{shimizu2006linear}. In ANM, we assume a model $Y = g(X) + E$ where $E$ is a random variable that is statistically independent of $X$. Although restrictive, this assumption leads to strong theoretical guarantees in terms of identifiability, and provides the state of the art accuracy in real datasets. Shimizu et al. \cite{shimizu2006linear} showed that if $g$ is linear and the noise is non-Gaussian, then the causal direction is identifiable.
Hoyer et al. \cite{hoyer2009nonlinear} showed that when $g$ is non-linear, irrespective of the noise, identifiability holds in a non-adverserial setting of system parameters. Peters et al.   \cite{peters2011causal} extended ANM to discrete variables.

Recently, Kocaoglu et al. \cite{kocaoglu2017entropic} extended the ANM framework and introduced the Entropic Causal Inference principle. Specifically, they argue that if the true causal direction is $X\rightarrow Y$ , then the random variable $Y$ satisfies $Y=g(X,E)$ where $g$ is an arbitrary function and $E$ is a ``simple" random variable that is statistically independent of $X$. The ``simplicity" of $E$ is characterized by a low R\'{e}nyi entropy. This means that for any model in the wrong direction, $X = \tilde{g}(Y, \tilde{E})$, the random variable $\tilde{E}$ has a greater Rényi entropy than $E$. Kocaoglu et al. focused on two special case of R\'{e}nyi entropy: $H_0$, which corresponds to the cardinality of $E$, and $H_1$, which is the classical Shannon entropy. They proved an identifiability result for $H_0$, showing that if the probability values are not adversarially chosen, for most functions, the true causal direction is identifiable under their model. Further, they showed that by using Shannon entropy ($H_1$),  they obtain causality tests that work with high probability in synthetic datasets, and slightly outperform state of the art alternative tests in real-world datasets.

The $H_1$ causality test was driven as follows: consider  $Y = g(X,E)$ where $E$ is independent of $X$.  Let $g_x:E \rightarrow Y$ be the mapping from $E$ to $Y$ when $X = x$, i.e., $g_x(E) \triangleq g(x,E)$. Then $P(Y = y|X = x) = P(g_x(E) =y|X = x) = P(g_x(E) = y)$ where the last equality follows from the independence of $X$ and $E$. Thus, the conditional distributions $P(Y |X = x)$ are treated as distributions that emerge by
applying some function $g_x$ to some unobserved variable $E$. Then the problem of identifying $E$ with minimum entropy given the joint distribution $P(x, y)$ becomes equivalent to the following: given distributions of the variables $g_i(E)$, find the distribution with minimum entropy (distribution of $E$) such that there exists functions $g_i$ which map this distribution to the observed distributions of $Y|X = i$. It can be shown that $H(E)  \geq H(g_1(E), g_2(E), \dots, g_R(E))$. Denote $g_i(E)$ as
a random variable $U_i$. Then, the best lower bound on $H(E)$ can be obtained by minimizing $H(U_1, U_2, \dots , U_R)$. Further, it is shown that it is always possible to construct a random variable $E$ that achieves this minimum. Thus the problem of finding the $E$ with minimum entropy given the joint distribution $P(x, y)$ is equivalent to the problem of finding the minimum entropy joint distribution of the random variables $U_i = (Y |X = i)$, given the marginal distributions $P(Y |X = i)$. 

Notice that this problem is equivalent to our lossless representation problem in Section \ref{section_lossless}: Given $R$ distributions (which correspond to the $U_i$'s) , we seek $R$ invertible mappings, from each of the $R$ distributions to $Y$, such that the entropy of $Y$ is minimal. Since the mappings are invertible, we have the $H(Y)=H(U_1, U_2, \dots , U_R)$ and the problem is equivalent to minimizing $H(Y)=H(U_1, U_2, \dots , U_R)$ subject to the marginal distributions of the $U_i$'s.

Kocaoglu et al. \cite{kocaoglu2017entropic} showed that this problem is NP-hard. Further, they conjectured that the  identifiability result they proved for $H_0$ entropy also holds in this case and proposed a greedy algorithm. Interestingly, their suggested algorithm is exactly the same as ours (Section \ref{greedy_section}). However, it is important to emphasize that our algorithm was already introduced in \cite{painsky2013memoryless} (and later in \cite{painsky2018phd}), several years before the work of Kocaoglu et al. \cite{kocaoglu2017entropic}.  
 
In a more recent work, Kocaoglu et al. \cite{kocaoglu2017entropic2} continue the study of the proposed greedy solution. They showed that it converges to a local minimum and derived several algorithmic properties. In addition, they derived a variant of greedy algorithm which is easier to analyze. 

The Entropic Causal Inference principle has gained a notable interest, mostly due its intuitive interpretation and promising empirical results. On the other hand, the proposed greedy algorithm does not guarantee to converge to the optimal solution. Moreover, it is not clear how far it is from the global minimum (or some infimum). Our suggested solutions address these concerns, as described in detail in Sections \ref{mixed_int} and \ref{lowest_section}. 

\subsection{The IKEA Problem}
An additional application of our suggested framework comes from industrial engineering, as it deals with optimal design of mass production storage units.

Consider the following problem: a major home appliances vendor is interested in mass manufacture of storage units. These units hold a single and predetermined design plan according to the market demand. Assume that the customers market is defines by $R$ major storing types (customers) and each of these customers is characterized by a different distribution of items they wish to store. The vendor is interested in designing a single storage unit that satisfies all of his customers. In addition, the vendor would like the storage unit to be as ``compact" and ``cheap" as possible. We refer to this problem as \textit{the IKEA problem}. Consider the $R$ customer distributions $\{X_i\}_{i=1}^R$ such that each customer $X_i$ is over a multinomial distribution with $A$ values, $X_i \sim \text{multnom}(\alpha_{1i},\alpha_{2i},\dots,\alpha_{Ai})$. We assume that all customer distributions have the same cardinality $A$. It is easy to generalize our solution to different cardinalities. We are interested in mapping the $R$ customer distributions into a single storage distribution, $P(Y)$, which represents the storage unit to be manufactured.

First, we would like every customer to be able to store its items exclusively; different items shall not be stored together. As in the previous sections, we use the notation $x_{a}$ to define the $a^{th}$ symbol of the random variable $X_i$. For our storing units problem, we would like to find a multinomial distribution over $B$ values ($B\geq A$ is unknown), $Y \sim \text{multnom}(\beta_1,\beta_2,\dots,\beta_B )$,  and $R$ sets of conditional probabilities between every $X_i=x_{a}$ and $Y$, such that $X_i=x_{a}$ can be uniquely recoverable (reversible) from $Y=y_b$ for every $i,\,a$ and $b$.

In addition, we would like the storing unit to be ``compact" and ``cheap". For most functionalities, a compact storing unit is rectangular shaped (closets, cabins, dressers etc.) and it is made of multiple compartments (shelves) in numerous columns.  We define the number of columns in our storage unit as $L$ and the number of shelves as $N$. 
Therefore, we would like to design a rectangular shaped storing unit such that given a number of columns $L$, every customer is able to store its items exclusively and the number of shelves is minimal. The corresponding technical requirements, in terms of our problem, are discussed in the following sections.

This problem is again NP hard, for the same reasons as in the previous sections, but it can be reformulated to a set of Mixed Integer Quadratic Programming (MIQP) problems, which is an established research area with extensive software available.

\subsection{Mixed Integer Quadratic Programming Formulation}
Let us first assume we are given both the number of columns in our desired storing unit $L$ and the number of shelves $N$. Since we require the storing unit to be rectangular, we need to find such distribution $Y$ that can be partitioned to $L$ columns with no residue. Therefore, we define $L$ equivalent partitions $\{\delta_l\}_{l=1}^L$   in the size of $\frac{1}{L}$ for which each $\{\beta_b\}_{b=1}^B$ is exclusively assigned. We are interested in such distribution $Y$ that the assignment can be done with no residue at all. 
To guarantee an exclusive assignment for a partition $\delta_l$ we introduce $T$ integer variables $\{T_{lb}\}_{b=1}^B$, indicating which of the $\{\beta_b\}_{b=1}^B$  is assigned to it. Therefore, we have
\begin{equation}
\sum_{b=1}^B T_{lb}\beta_b=\delta_l, \quad \sum_{l=1}^L T_{lb}=1, \quad T_{lb} \in \{0,1\}, \quad \text{for all}  \quad  l=1,\dots,L\;\; \text{and}\;\; b=1,\dots,B
\end{equation}
and the optimization objective is simply
\begin{equation}
\sum_{l=1}^L \left(\delta_l-\frac{1}{L}\right)^2 \rightarrow \text{min}.
\end{equation}
Our constraints can easily be added to the mixed integer formulation presented in the previous sections and the new optimization problem is: 
\begin{equation}
\min z^T cc^T z -\frac{2}{L} c^T z
\end{equation}
\begin{equation}
\nonumber
\text{s.t.}\;\; A_{eq} z = b_{eq}
\end{equation}
\begin{equation}
\nonumber
\quad \quad \quad A_{ineq} z \leq b_{ineq}
\end{equation}
\begin{equation}
\nonumber
\quad\;  0 \leq  z \leq 1
\end{equation}
\begin{equation}
\nonumber
\quad\quad\quad\quad\; T_{lb} \in \{0,1\}\;\;  \forall l,b
\end{equation}
where $z$ is a vector of all parameters in our problem  $z=\left[ p_{iab} \,\,  T_{lb}\right]^T$ and $c^T z=\delta$.
\subsection{Minimizing the Number of Shelves }
The problem of minimizing the residue of the assignment, given the number of columns and the number of shelves, may be formulated as a MIQP. In this section we focus on finding the minimal number of shelves $N$ that guarantees zero residue.  Notice that for large enough $N$ the residue goes to zero, as $Y$ tends to take values on a continuous set. We also notice that the residue is a monotonically non-increasing function of $N$, since by allowing a greater number of shelves we can always achieve the same residue by repeating the previous partitioning up to a meaningless split of one of the compartments. These two qualities allow very efficient search methods (gradient, binary etc.) to find the minimal $N$ for which the residue is ``$\epsilon$- close" to zero. 

Here, we suggest the following simple binary search based algorithm for minimizing the number of shelves for a rectangular shaped storing unit:

\begin{enumerate}
\item	Choose a large enough initial value $N$ such that applying it in the MIQP presented above results in zero residue.
\item	Define a step size as $Stp = \lfloor N/2 \rfloor$.
\item	Apply the MIQP with $N'=N-Stp$.
\item	If the residue is zero repeat previous step with $N=N'$ and $Stp=\lfloor Stp/2 \rfloor$. Otherwise repeat the previous step with $N=N'$ and $Stp=-\lfloor Stp/2 \rfloor$. Terminate if $Stp=0$.
\end{enumerate}

\section{Lossless Innovation Representation and its Relation to the Optimal Transportation Problem}

The essence of the lossless innovation representation problem is finding a single marginal distribution  to be matched to multiple ones under varying costs functions. This problem can be viewed as a design generalization of a multi-marginal setup for the well-studied optimal transportation problem \cite{monge1781memoire}.  In other words, we suggest that the optimal transportation problem can be generalized to a design problem in which we are given not a single but multiple source probability measures. Moreover, we are interested not only in finding mappings that minimizes some cost function, but also in finding the single target probability measure that minimizes that cost.

\subsection{The Optimal Transportation Problem}

The optimal transportation problem was presented by \cite{monge1781memoire} and has generated an important branch of mathematics in the last decades. The optimal transportation problem has many applications in multiple fields such as Economics, Physics, Engineering and others. The problem originally studied by Monge was the following: assume we are given a pile of sand (in $\mathbb{R}^3$) and a hole that we have to completely fill up with that sand. Clearly the pile and the hole must have the same volume and different ways of moving the sand will give different costs of the operation. Monge wanted to minimize the cost of this operation. Formally, the optimal transportation problem is defined as follows. Let $X$ and $Y$ be two seperable metric spaces such that any probability measure on $X$ (or $Y$) is a Radon measure. Let $c:X \times Y \rightarrow[0,\infty]$ be a Borel-measurable function. Given probability measure $\mu$ on $X$ and $\nu$ on $Y$, Monge's optimal transportation problem is to find a mapping $T:X \rightarrow Y$ that realizes the infimum 
$$ \inf \left\{ \int_X c(x,T(x))d\mu(x) \bigg| T_{*}(\mu)=\nu  \right\}$$
where $T_{*}(\mu)$ denotes the \textit{push forward} of $\mu$ by $T$. A map $T$ that attains the infimum is called the \textit{optimal transport map}. 

Notice that this formulation of the optimal transportation problem can be ill-posed in some setups where there is no ``one-to-one" transportation scheme. For example, consider the case where the original pile is a Dirac measure but the hole is not shaped in this manner. A major advance in this problem is due to Kantorovich \cite{kantorovich1942translocation} who proposed the notation of a ``weak solution" to the optimal transportation problem; he suggested looking for plans instead of transport maps \cite{kantorovich2006problem}. The main difference between Kantorovich work and Monge formulation is that while the original Monge problem is restricted to transportation of the complete mass at each point on the original pile, the relaxed Kantorovich version allows splitting of masses.   
However, it is clear that no such result can be expected without additional assumptions on the measures and cost. The first existence and uniqueness result is due to Brenier \cite{brenier1987polar}. In his work, Brenier considered the case where both the pile $X$ and the hole $Y$ satisfy $X=Y \in R^n$, and the cost function is  $c(x,y)=|x-y|^2$. Then, he showed that if the probability measure of $X$ is absolutely continuous with respect to the Lebesgue measure there exists a unique optimal transport map. Following \cite{brenier1987polar},  many researchers started working on this problem, showing existence of optimal maps with more general costs. More recently, Pass published a series of papers discussing a multi-marginal generalization of the optimal transportation problem \cite{pass2011uniqueness,pass2012local,pass2013class}. In his work, Pass considered multiple marginal distributions to be matched to a single destination with a given distribution. In his papers, Pass discussed the existence and uniqueness of solutions for both a Monge-like and Kantorovich-like multi-marginal problems, under different measures and cost functions, and the connection between the two formulations.

In our work we generalize the multi-marginal optimal transportation from a design perspective; we look at the multi-marginal optimal transportation problem not only as a minimization problem over a set of mappings but also ask ourselves what is the optimal target measure such that the cost function is minimal. We show that this problem has very broad use in many fields, especially when taking an equivalent form of multiple source measures matched to a single target. More specifically, we focus our interest on a set of mappings that allow unique recovery between the measures. That is, given the source measures and a target measure, one can uniquely recover any realization of the sources from a given realization of the target. This type of mappings hold a special interest in many applications, as it is shown throughout the previous sections.

\section{Discussion}
In this work we introduce a method to represent any stochastic process as an innovation process, under different objectives and constraints. We show that there exists a simple closed-form solution if we allow the outcome process to take values over a continuous set. However, restricting the alphabet size may cause lossy recovery of the original process. Two solutions are presented in the face of two possible objectives in the discrete case. First, assuming the alphabet size is too small to allow lossless recovery, we aim to maximize the mutual information with the original process. Alternatively, we may seek a minimal alphabet size so that a unique recovery is guaranteed, while minimizing the entropy of the resulting process.  In both cases the problem is shown to be hard and several approaches are discussed. In addition, a simple closed-form solution is provided for the binary first order Markov process.  

It is important to mention that our work focuses on sequential analysis of stochastic processes. This means that at every time stamp $k$, we construct a corresponding  innovation process $Y_k$ from $X_k$. This framework may be generalized to a non-sequential setup, where $X^k$ is analyzed as a batch. In fact, this problem is equivalent to the well-studied Independent Component Analysis, both over continuous \cite{hyvarinen1998independent} and discrete \cite{painsky2014generalized,painsky2016Binary} variables. The ICA problem has many applications in learning and inference. Recently, it was further applied to source coding and data compression \cite{painsky2015Universal,painsky2016Simple,painsky2016largetrans,painsky2016compressing,painsky2018lossless}.

The innovation representation problem has many applications, as it simplifies complex systems and allows simpler analytical and computational solutions. In this work we show that the lossless innovation representation may be applied to infer causality, as was previously shown in\cite{ kocaoglu2017entropic} . In addition, we introduce a practical application from the industrial world, denoted as the IKEA problem.

The problem of finding a single marginal distribution function to be fitted to multiple ones under varying costs functions can be viewed as a multi-marginal generalization of the well-studied optimal transportation problem.  In other words, we suggest that the optimal transportation problem can be generalized to a design problem in which we are given not a single but multiple source distribution functions. In this case, we are  interested not only in finding conditional distributions to minimize a cost function, but also in finding the optimal target distribution. We argue that this problem has multiple applications in the fields of Economics, Engineering and others.

\section*{Appendix A}
\label{on_the_uniquness}
For the simplicity of the presentation we reformulate our problem as follows: assume a random variable $Y$ is to be constructed from a random variable $X$ given $X$'s past, denoted as $X_p$.  Therefore, we would like to construct a memoryless random variable $Y=g(X,X_p)$, with a given $F_Y (y)$, such that
\begin{enumerate} [(i)]
\item  $Y$ is statistically independent in $X_p$.
\item  $X$ can be uniquely recovered from $Y$ given $X_p$.
\item $Y \sim F_Y (y)$.
\end{enumerate}
Therefore, our goal is to find such $Y=g(X,X_p)$ and discuss its uniqueness.

Let us first consider a special case where $Y$ is uniformly distributed, $F_Y (y)=y$ for all $y \in [0,1]$. For $Y$ to be statistically independent of $X_p$ it must satisfy
\begin{equation}
F_{Y|X_p} (y | X_p=x_p)=F_Y (y).
\label{bla}
\end{equation}
Expanding the left hand side of (\ref{bla}) we have that for every $x_p$, 
$$F_{Y|X_p} (y | X_p=x_p)=P(Y \leq y | X_p=x_p)=P(g(X,X_p)\leq y | X_p=x_p).$$
The second requirement suggests that $X$ is uniquely recovered from $Y$ and $X_p$, which implies $X=g_{X_p}^{-1} (Y)$. Assume $g(X,X_p)$ is monotonically increasing with respect to $X$. Then, we have that
\begin{align}
\label{5.4}
F_{Y|X_p} (y | X_p=x_p)=&P(g(X,X_p) \leq y| X_p=x_p)=P(X \leq g_{X_p}^{-1} (y) | X_p=x_p)=F_{X|X_p} (g_{X_p}^{-1} (y)|X_p=x_p) 
\end{align}
where the second equality follows from the monotonically increasing behavior of $g(X,X_p)$ with respect to $X$. Therefore, we are looking for a monotonically increasing transformation $x=g_{X_p}^{-1} (y)$ such that
$$ F_{X|X_p} (g_{X_p}^{-1}(y)| X_p=x_p)=F_Y (y)=y.$$
The following lemmas discuss the uniqueness of monotonically increasing mappings when $X$ is a non-atomic (Lemma \ref{uniquness_1}) or atomic (Lemma \ref{uniquenss_2}) measures.

\begin{lemma}
\label{uniquness_1}
Assume $X$ is a non-atomic random variable with a strictly monotonically increasing commutative distribution function  $F_X (x)$ (that is, $X$ takes values on a continuous set). Suppose there exists a transformation on its domain, $x=h(y)$ such that
$$ F_X (x)|_{x=h(y)}=F_Y (y).$$
Then,
\begin{enumerate}[(1)]
\item 	$x=h(y)$ is unique \label{1}
\item 	 $h(y)$ is monotonically non decreasing (increasing, if $F_Y (y)$ is strictly increasing). \label{2}
\end{enumerate}
\end{lemma}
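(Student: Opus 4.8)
The plan is to exploit the strict monotonicity of $F_X$, which makes it injective and continuously invertible on its range, and then read off both claims directly from the defining identity $F_X(h(y)) = F_Y(y)$. Since $X$ is non-atomic, $F_X$ is strictly increasing and continuous, so it possesses a single-valued inverse $F_X^{-1}$ on its range; the only candidate transformation is therefore $h(y) = F_X^{-1}(F_Y(y))$, and the hypothesis that some $h$ exists guarantees that $F_Y(y)$ always lies in the range of $F_X$ so that this composition is meaningful.

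For uniqueness, I would suppose two transformations $h_1$ and $h_2$ both satisfy the defining relation, so that $F_X(h_1(y)) = F_Y(y) = F_X(h_2(y))$ for every $y$. Because $F_X$ is strictly monotonically increasing it is injective, and applying injectivity pointwise forces $h_1(y) = h_2(y)$ for all $y$. Equivalently, the closed form $h(y) = F_X^{-1}(F_Y(y))$ makes the uniqueness transparent, since $F_X^{-1}$ is a genuine function.

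For monotonicity, I would take any $y_1 < y_2$ and use the fact that $F_Y$, being a cumulative distribution function, is non-decreasing, so that $F_X(h(y_1)) = F_Y(y_1) \leq F_Y(y_2) = F_X(h(y_2))$. If $h(y_1) > h(y_2)$ were to hold, strict monotonicity of $F_X$ would give $F_X(h(y_1)) > F_X(h(y_2))$, contradicting the displayed inequality; hence $h(y_1) \leq h(y_2)$ and $h$ is non-decreasing. In the strict case, if $F_Y$ is strictly increasing then $F_Y(y_1) < F_Y(y_2)$, whence $F_X(h(y_1)) < F_X(h(y_2))$, which rules out $h(y_1) = h(y_2)$ and, combined with the previous step, yields $h(y_1) < h(y_2)$.

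There is no serious obstacle here; the argument is essentially a transfer of the order and injectivity properties of $F_X$ through the composition $F_X \circ h = F_Y$. The only point requiring a little care is the well-definedness of $F_X^{-1}$ on the relevant set, which the non-atomic (hence continuous and strictly increasing) assumption on $F_X$ secures, together with the hypothesis that a solution $h$ exists, ensuring $F_Y(y)$ never escapes the range of $F_X$.
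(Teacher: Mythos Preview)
Your proposal is correct and follows essentially the same route as the paper: both parts exploit the strict monotonicity of $F_X$ to transfer injectivity and order through the identity $F_X\circ h = F_Y$. The paper phrases uniqueness via a contradiction with a perturbation $\delta$ and monotonicity via $F_Y(y+\delta)\geq F_Y(y)$, but these are just notational variants of your injectivity and $y_1<y_2$ arguments.
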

\begin{proof}
Let us begin by proving (\ref{1}). The transformation $x=h(y)$ satisfies
$$ F_X (x) |_{x=h(y)}=F_X (h(y))=P(X\leq h(y))=F_Y (y).$$
Suppose there is another transformation $x=g(y)$  that satisfies the conditions stated above. Then,
$$F_X (x) |_{x=g(y) }=F_X (g(y))=P(X\leq g(y))=F_Y (y).$$
Therefore, 
$$P(X\leq g(y))=P(X\leq h(y)) \quad \forall y.$$
Suppose $h(y)\neq g(y)$. This means that there exists at least a single $y=\tilde{y}$ where $g(\tilde{y})=h(\tilde{y})+\delta$ and $\delta \neq 0$. It follows that
$$ P(X \leq h(\tilde{y})+\delta)=P(X \leq  h(\tilde{y}))$$
or in other words
$$ F_X (h(\tilde{y}))=F_X (h(\tilde{y} )+\delta)$$
which contradicts the monotonically increasing behavior of $F_X (x)$ where the transformation is defined.

As for (\ref{2}), we have that $F_X (h(y))=F_Y (y) $ for all $y$. Therefore, 
$$F_X (h(y+\delta))=F_Y (y+\delta).$$
$F_Y (y)$ is a CDF which means that it satisfies $F_Y (y+\delta) \geq F_Y (y)$. Then, 
$$F_X (h(y+\delta))\geq F_X (h(y))$$
or strictly larger, if $F_Y (y)$ is monotonically increasing.
Since $F_X (x)$ is monotonically increasing we have that $h(y+\delta)\geq h(y)$, or strictly larger, if $F_Y (y)$ is monotonically increasing.  \hfill $\square$
\end{proof}

\begin{lemma}
\label{uniquenss_2}
Assume $X$ is a non-atomic random variable with a commutative distribution function  $F_X (x)$. Suppose that there exists a transformation on its domain, $x=h(y)$ such that
$$ F_X (x)|_{x=h(y)}=F_Y (y).$$
Then,
\begin{enumerate}[(1)]
\item 	$x=h(y)$ is unique up to transformations in zero probability regions in $X$'s domain. \label{3}
\item 	 $h(y)$ is monotonically non-decreasing (increasing, if $F_Y (y)$ is strictly increasing). \label{4}
\end{enumerate}
\end{lemma}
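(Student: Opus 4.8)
The plan is to follow the argument of Lemma \ref{uniquness_1} almost verbatim and to isolate the single place where strict monotonicity was used, replacing the contradiction obtained there by the weaker conclusion that any ambiguity is confined to a flat stretch of $F_X$, i.e. a zero-probability region. Since $X$ is non-atomic, $F_X$ is continuous but, in contrast to Lemma \ref{uniquness_1}, not assumed strictly increasing, so the only new phenomenon is that $F_X(a)=F_X(b)$ for $a<b$ no longer forces $a=b$; it forces $P(a\leq X\leq b)=0$.

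First I would prove part (\ref{3}). Suppose $h(y)$ and $g(y)$ both satisfy $F_X(h(y))=F_Y(y)=F_X(g(y))$ for all $y$, and that they disagree at some $\tilde{y}$, writing $g(\tilde{y})=h(\tilde{y})+\delta$ with $\delta\neq 0$ (WLOG $\delta>0$). Exactly as in Lemma \ref{uniquness_1} this yields $F_X(h(\tilde{y}))=F_X(h(\tilde{y})+\delta)$. Where the earlier proof derived a contradiction from strict monotonicity, here the equality of a non-decreasing $F_X$ at the two endpoints forces $F_X$ to be constant on $[h(\tilde{y}),h(\tilde{y})+\delta]$, so $P(h(\tilde{y})\leq X\leq h(\tilde{y})+\delta)=0$. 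Hence the two transformations can differ only as they pass through a zero-probability region of $X$'s domain, which is precisely the asserted uniqueness ``up to transformations in zero probability regions.''

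Next I would handle part (\ref{4}), again mirroring Lemma \ref{uniquness_1}. From $F_X(h(y))=F_Y(y)$ and the CDF inequality $F_Y(y+\delta)\geq F_Y(y)$ for $\delta>0$, I obtain $F_X(h(y+\delta))\geq F_X(h(y))$. If one had $h(y+\delta)<h(y)$, then monotonicity of $F_X$ gives $F_X(h(y+\delta))\leq F_X(h(y))$ as well, forcing equality and therefore a flat, zero-probability stretch of $F_X$ between the two values. Outside such regions this excludes $h(y+\delta)<h(y)$, so $h$ is non-decreasing; combined with part (\ref{3}) we may always pass to the canonical non-decreasing representative $h(y)=\inf\{x:F_X(x)\geq F_Y(y)\}$. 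When $F_Y$ is strictly increasing the inequality becomes strict, $F_X(h(y+\delta))>F_X(h(y))$, which rules out $h(y+\delta)\leq h(y)$ outright and yields strict monotonicity.

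The main obstacle is bookkeeping rather than a new idea: dropping strict monotonicity means the clean one-line contradiction of Lemma \ref{uniquness_1} must be replaced throughout by the dichotomy ``either the two points coincide or the interval between them carries zero mass.'' The care lies in verifying that \emph{every} source of non-uniqueness and \emph{every} potential inversion of $h$ is localized to such a flat region (equivalently, a $P$-null set of $X$'s domain), and in observing that the generalized inverse above furnishes an explicit non-decreasing $h$, so that the stated conclusions hold verbatim for this canonical choice.
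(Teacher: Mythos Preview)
Your argument is correct and in fact cleaner than the paper's, but the two proofs diverge at the key step of part~(\ref{3}). You take the hypothesis ``non-atomic'' at face value: $F_X$ is continuous but possibly flat, so from $F_X(h(\tilde y))=F_X(h(\tilde y)+\delta)$ you conclude that $F_X$ is constant on $[h(\tilde y),h(\tilde y)+\delta]$, whence that interval carries zero mass. The paper, despite the ``non-atomic'' label in the statement, actually argues an \emph{atomic} case: it supposes $h(\tilde y)$ and $g(\tilde y)$ land on support points $x_1,x_2$ with $P(X=x_1)>0$ and $P(X=x_2)>0$, derives a contradiction from $P(X\le x_1)=P(X\le x_2)$, and concludes that the only escape is for both values to fall in a gap between atoms---its ``zero probability region.'' Your reading matches the stated hypothesis and gives a self-contained proof; the paper's argument is really the discrete companion to Lemma~\ref{uniquness_1} and the word ``non-atomic'' in the statement appears to be a slip. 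For part~(\ref{4}) both you and the paper simply reuse the monotonicity argument of Lemma~\ref{uniquness_1}, so there is no difference there; your extra remark about the generalized-inverse representative $h(y)=\inf\{x:F_X(x)\ge F_Y(y)\}$ is a helpful normalization the paper does not mention.
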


\begin{proof}
(\ref{1}) As in Lemma \ref{uniquness_1}, 	let us assume that there exists another transformation $x= g(y)$  that satisfies the desired conditions. Therefore we have that
$$ P(X\leq g(y))=P(X\leq h(y))\quad \forall y.$$
Assuming $h(y)\neq g(y)$ we conclude that there exists at least a single value $y=\tilde{y}$ such that $g(\tilde{y})=h(\tilde{y})+\delta$ and $\delta \neq 0$. 
If both $h(\tilde{y})$ and $g(\tilde{y} )$ are valid values in $X$'s domain (positive probability) then we have $P(X\leq x_1 )=P(X\leq x_2)$.
This contradicts  $P(X=x_1 )>0$ and $P(X=x_2 )>0$ unless $x_1=x_2$.\\
 Moreover, if $g(\tilde{y} )\in [x_1,x_2 ]$ and $h(\tilde{y}) \notin [x_1,x_2 ]$ then again it contradicts $P(X=x_1 )>0$ and $P(X=x_2 )>0$ unless $x_1=x_2$. The only case in which we are not facing a contradiction is where  $g(\tilde{y}),h(\tilde{y}) \in[x_1,x_2 ]$. In other words, $x= g(y)$  is unique up to transformations in zero probability regions of $X$'s domain (regions which satisfy $P(X=g(\tilde{y} ))=0)$.

\noindent (\ref{4}) The monotonicity proof follows the same derivation as in Lemma \ref{uniquness_1}.    \hfill $\square$

\end{proof}
Therefore, assuming that there exists a transformation $x=g_{X_p}^{-1} (y)$ such that 
$$F_{X|X_p} (g_{X_p}^{-1} (y)|X=x_p)=F_Y (y)=y,$$
then it is unique and monotonically increasing. In this case we have that
\begin{align}
F_Y (y)=&F_{X|X_p} (g_{X_p}^{-1} (y)|X=x_p)=P(X\leq g_{X_p}^{-1} (y)|X=x_p)=P(g(X,X_p)\leq y | X=x_p)  =F_{Y|X_p}(y|X_p=x_p)  
\end{align}
which means $Y$ is statistically independent of $X_p$.
Equivalently, if we find a monotonically increasing transformation $Y=g(X,X_p)$ that satisfies conditions (i), (ii) and (iii) then it is unique.

To this point, we discussed the case in where the functions $g(X,X_p)$ are monotone in $X$. For this set of functions equation (\ref{5.4}) is a sufficient condition for satisfying (i) and (ii). 
However, we may find non monotonically increasing transformations $Y=h(X,X_p)$ which satisfy conditions (i), (ii) and (ii) but do not satisfy (\ref{5.4}). For example: $h(X,X_p)=1-g(X,X_p)$. Notice that these transformations are necessarily measurable, as they map one distribution to another, and reversible with respect to $X$ given $X_p$ (condition ii). In this case, the following properties hold:
\begin{lemma}
\label{uniqueness_3}
Assume h(X,Y) satisfies the three conditions  mentioned above but does not satisfy equation (5.4). Then:
\begin{enumerate}[(1)]
\item $h(X,X_p)$ is not monotonically increasing in $X$ \label{5}
\item $h(X,X_p)$ is necessarily a ``reordering" of $g(X,X_p)$ \label{6}
\end{enumerate}
\end{lemma}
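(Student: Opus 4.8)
The plan is to dispatch the two assertions separately, reusing the monotone analysis carried out around equation (\ref{5.4}) together with the uniqueness statements of Lemmas \ref{uniquness_1} and \ref{uniquenss_2}.

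For assertion (\ref{5}) I would argue by contradiction. Suppose $h(X,X_p)$ were monotonically increasing in $X$. Fixing a past value $X_p=x_p$ and writing $h_{x_p}(x)=h(x,x_p)$, condition (ii) makes $h_{x_p}$ invertible, so the chain of equalities in (\ref{5.4}) goes through verbatim with $h_{x_p}^{-1}$ in place of $g_{x_p}^{-1}$, yielding $F_{Y\mid X_p}(y\mid x_p)=F_{X\mid X_p}(h_{x_p}^{-1}(y)\mid x_p)$. Condition (i) forces the left-hand side to equal $F_Y(y)$, so $x=h_{x_p}^{-1}(y)$ solves $F_{X\mid X_p}(x\mid x_p)=F_Y(y)$; by the uniqueness of the monotone solution established in Lemmas \ref{uniquness_1}--\ref{uniquenss_2}, this forces $h_{x_p}^{-1}=g_{x_p}^{-1}$, hence $h=g$ and $h$ would satisfy (\ref{5.4}), contradicting the hypothesis. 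Therefore $h$ is not monotonically increasing in $X$.

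For assertion (\ref{6}) the idea is to exhibit the explicit rearrangement tying $h$ to $g$. For each fixed past $x_p$, both $g_{x_p}$ and $h_{x_p}$ are invertible in $x$ by condition (ii) and, by conditions (i) and (iii), both push the conditional law $F_{X\mid X_p}(\,\cdot\mid x_p)$ forward to the \emph{same} target $F_Y$. I would set $\sigma_{x_p}=h_{x_p}\circ g_{x_p}^{-1}$ on the range of $Y$ and check that $(\sigma_{x_p})_{*}F_Y=(\sigma_{x_p})_{*}(g_{x_p})_{*}F_{X\mid X_p}=(h_{x_p})_{*}F_{X\mid X_p}=F_Y$, so that $\sigma_{x_p}$ is a measure-preserving bijection of $([0,1],F_Y)$. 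Since $h_{x_p}=\sigma_{x_p}\circ g_{x_p}$, this says precisely that $h$ arises from $g$ by a measure-preserving reordering of the $Y$-values; for the guiding example $h=1-g$ one indeed has $\sigma(y)=1-y$, which preserves the uniform law.

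The step I expect to be the main obstacle is the atomic or mixture case. When $F_{X\mid X_p}(\,\cdot\mid x_p)$ carries atoms, $g_{x_p}^{-1}$ is pinned down only off the zero-probability regions already identified in Lemma \ref{uniquenss_2}, so $\sigma_{x_p}$ is determined merely up to $F_Y$-null sets, and I must phrase the measurability and the reordering claim modulo such sets. I would also translate the abstract measure-preserving property into its concrete discrete form, where a reordering is literally a permutation that exchanges atoms carrying equal probability mass; verifying that $\sigma_{x_p}$ sends each atom of $F_Y$ to an atom of equal mass is the single point that needs genuine care.
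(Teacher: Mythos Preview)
Your argument is correct and tracks the paper's proof closely. Part (\ref{5}) is identical in spirit: both you and the paper run the contrapositive, feeding a hypothetically monotone $h$ through the derivation of (\ref{5.4}) and invoking the uniqueness of Lemmas \ref{uniquness_1}--\ref{uniquenss_2} to force $h=g$.

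For part (\ref{6}) there is a small difference in presentation. The paper argues from the other side: it posits a sorting transformation $S$ that makes $S\circ h_{x_p}$ monotone in $X$, then invokes uniqueness to conclude $S\circ h=g$, hence $h=S^{-1}\circ g$. You instead write down the rearrangement directly as $\sigma_{x_p}=h_{x_p}\circ g_{x_p}^{-1}$ and verify it preserves $F_Y$ via push-forwards. These are the same bijection viewed from opposite ends; your version has the advantage of making the measure-preserving property explicit (the paper leaves unspoken why $S\circ h$ still satisfies conditions (i) and (iii)), while the paper's version is marginally quicker because it skips that verification. Your caveat about the atomic case---that $\sigma_{x_p}$ is only determined up to $F_Y$-null sets---is a level of care the paper does not spell out but which is consistent with its Lemma \ref{uniquenss_2}.
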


\begin{proof}
(\ref{5}) Assume there exists a transformation $Y=h(X,X_p)$ which satisfy the three conditions (i), (ii) and (iii). Moreover assume $h(X,X_p) \neq g(X,X_p)$.  We know that 
$$F_{Y|X_p} (y|X_p=x_p)=P(h(X,X_p) \leq y | X_p=x_p)=F_Y (y) $$   
but on the other hand, $h(X,X_p) \neq g(X,X_p)$  which implies
$$ F_{X|X_p} (h_{X_p}^{-1} (y)|X_p=x_p) \neq F_Y (y)$$					
since $g(X,X_p)$ is unique. Therefore,
$$ P(h(X,X_p) \leq y | X_p=x_p) \neq P (X \leq h_{X_p}^{-1} (y) | X_p=x_p)$$
which means $h(X,X_p)$ cannot be monotonically increasing.\\

\noindent (\ref{6}) Notice we can always generate a (reversible) transformation of $h(X,X_p)$ that will make it monotonically increasing with respect to $X$, since $X$ is uniquely recoverable from $h(X,X_p)$ and $X_p$. Consider this transformation as $S(h(X,X_p))$. Therefore, we found $Y=S(h(X,X_p))$ such that $y$ is monotonically increasing, independent of $X_p$ and $X$ is uniquely recoverable from $Y$ and $X_p$. This contradicts the uniqueness of $g(X,X_p)$ unless $S(h(X,X_p))=g(X,X_p)$, which means 
$h(X,X_p)=S^{-1} (g(X,X_p))$.  \hfill $\square$
\end{proof}

To conclude, it is enough to find $Y=g(X,X_p)$ which is invertible and monotonically increasing with respect to $X$ given $X_p=x_p$, and satisfies
$$ F_{Y|X_p} (y|X_p=x_p)=F_{Y|X_p} (g_{X_p}^{-1} (y)|X_p=x_p)=F_Y (y)=y.$$ 	
If such $Y=g(X,X_p)$ exists then
\begin{enumerate}
\item  If $F_{X|X_p}(x|X_p=x_p)$ is monotonically increasing, then $Y=g(X,X_p)$ is unique according to Lemma \ref{uniquness_1}
\item  If $X|X_p$ takes on discrete values, then again $Y=g(X,X_p)$ is unique, up to different transformations in zero probability regions of the $X|X_p$
\item  Any other transformations $h(X,X_p)$ that may satisfy conditions (i),(ii) and (iii) is necessarily a function of $g(X,X_p)$ (and not monotonically increasing).
\end{enumerate}

Following lemma \ref{uniquness_1} we define 
$Y=F_{X|X_p}(x|x_p)-\Theta \cdot P_{X|X_p} (x|x_p)$, where $\Theta \sim \text{Unif}[0,1]$ is statistically independent of $X$ and $X_p$. Therefore we have that
\begin{align}
F_{Y|X_p} (y|x_p)=& P(F_{X|X_p} (x|x_p)-\Theta \cdot P_{X|X_p} (x|_p) \leq y | X_p=x_p)=\\\nonumber
&P(F_{X|X_p} (x|x_p)-\Theta \cdot P_{X|X_p} (x|x_p) \leq h^{-1} (y))=y=F_Y(y)
\end{align}
where the first equality follows from the fact that all the terms in $F_{X|X_p} (x|x_p)-\Theta \cdot P_{X|X_p} (x|x_p) \leq h^{-1} (y)$ are already conditioned on $X_p$, or statistically independent of $X_p$, and the second equality follows from $	F_{X|X_p} (x|x_p)-\Theta \cdot P_{X|X_p} (x|x_p) \sim \text{Unif}[0,1]$, according to lemma \ref{uniquness_1}. The third condition is remaining requirement. However, it is easy to see that $Y=F_{X|X_p} (x|x_p)-\Theta \cdot P_{X|X_p} (x|x_p)$ is reversible with respect to $X$ given $X_p=x_p$. Therefore, we found a monotonically increasing transformation $Y=g(X,X_p)$ that satisfies 
$$ F_{X|X_p} (x|x_p)=F_{X|X_p} (g_{X_p}^{-1} (y)|X_p=x_p)=F_Y (y)=y.$$

Going back to our original task, we are interested in finding such $Y=g(X,X_p)$ such that there exists a random variable $Y$ that satisfies conditions (i), (ii) and (iii). Throughout our analysis, we discussed the uniqueness in the case where $Y$ is uniformly distributed. Assume we are now interested in a non-uniformly distributed $Y$. Lemma \ref{uniquness_1} shows that we can always reshape a uniform distribution to any probability measure by applying the inverse of the desired CDF on it. Moreover, if the desired probability measure is non-atomic, this transformation is reversible. Is this mapping unique? This question is already answered by Lemmas \ref{uniquenss_2} and \ref{uniqueness_3}; if we limit ourselves to monotonically increasing transformation, then the solution we found is unique.

However, assume we do not limit ourselves to monotonically increasing transformations and we have a transformation  $V=G(Y)$ that satisfies $V \sim F_V (v)$. Since $Y$  is uniformly distributed we can always shift between local transformations on sets of the same lengths while maintaining the transformation measurable. Then we can always find $S(G(Y))$ which makes it monotonically increasing with respect to $Y$. This contradicts the uniqueness of the monotonically increasing set unless $S(G(Y))$ equals the single unique transformation we found.  

Putting it all together we have a two stage process in which we first generate a uniform transformation and then shape it to a desired distribution $V$ through the inverse of the desired CDF. We show that in both stages, if we limit ourselves to monotonically increasing transformations the solution presented in above is unique. However, if we allow ourselves a broader family of functions we necessarily end up with either the same solution, or a ``reordering" of it which is not monotonically increasing.

\section*{Appendix B}
We would like to show that for $\alpha_1<\alpha_2<\frac{1}{2}$ the following applies:

\begin{equation}
\frac{\alpha_2}{1-\alpha_1-\alpha_2}<\frac{h_b(\alpha_2)-h_b(\alpha_1)+\alpha_2 h_b\left( \frac{\alpha_1}{\alpha_2}\right)}{\alpha_2 h_b\left( \frac{\alpha_1}{\alpha_2}\right)+(1-\alpha_1)h_b\left( \frac{\alpha_2-\alpha_1}{1-\alpha_1}\right)}
\end{equation}
\begin{proof}
Let us first cross multiply both sides of the inequality
\begin{align}
&\alpha^2_2 h_b\left( \frac{\alpha_1}{\alpha_2}\right)+\alpha_2(1-\alpha_1)h_b\left( \frac{\alpha_2-\alpha_1}{1-\alpha_1}\right)<(1-\alpha_1-\alpha_2)(h_b(\alpha_2)-h_b(\alpha_1))+(1-\alpha_1)\alpha_2 h_b\left( \frac{\alpha_1}{\alpha_2}\right)+\alpha^2_2 h_b\left( \frac{\alpha_1}{\alpha_2}\right)
\end{align}
which leads to
\begin{equation}
\nonumber
(1-\alpha_1-\alpha_2)(h_b(\alpha_2)-h_b(\alpha_1))+(1-\alpha_1)\alpha_2 h_b\left( \frac{\alpha_1}{\alpha_2}\right)-\alpha_2(1-\alpha_1)h_b\left( \frac{\alpha_2-\alpha_1}{1-\alpha_1}\right)>0.
\end{equation}
\\
Since $h_b(\alpha_2)-h_b(\alpha_1)>0$ and $1-\alpha_1-\alpha_2>(1-\alpha_1)\alpha_2$ we have that
\begin{align}
\nonumber
&(1-\alpha_1-\alpha_2)(h_b(\alpha_2)-h_b(\alpha_1))+(1-\alpha_1)\alpha_2 h_b\left( \frac{\alpha_1}{\alpha_2}\right)-\alpha_2(1-\alpha_1)h_b\left( \frac{\alpha_2-\alpha_1}{1-\alpha_1}\right)>\\\nonumber
&(1-\alpha_1)\alpha_2\left[h_b(\alpha_2)-h_b(\alpha_1)+h_b\left( \frac{\alpha_1}{\alpha_2}\right)- h_b\left( \frac{\alpha_2-\alpha_1}{1-\alpha_1}\right)  \right].
\end{align}
Therefore, it is enough to show that $h_b(\alpha_2)-h_b(\alpha_1)+h_b\left( \frac{\alpha_1}{\alpha_2}\right)- h_b\left( \frac{\alpha_2-\alpha_1}{1-\alpha_1}\right) >0$. Since $h_b\left( \frac{\alpha_2-\alpha_1}{1-\alpha_1}\right)=h_b\left( \frac{1-\alpha_2}{1-\alpha_1}\right)$ we can rewrite the inequality as
\begin{align}
\nonumber
h_b(\alpha_2)-h_b(\alpha_1)>h_b\left( \frac{1-\alpha_2}{1-\alpha_1}\right)-h_b\left( \frac{\alpha_1}{\alpha_2}\right).
\end{align}
Notice that  $\alpha_1<\alpha_2<\frac{1}{2}$ follows that $\frac{1-\alpha_2}{1-\alpha_1}>\frac{1}{2}$.

 Let us first consider the case where $\frac{\alpha_1}{\alpha_2}\geq\frac{1}{2}$. We have that
\begin{align}
 \frac{1-\alpha_2}{1-\alpha_1}-\frac{\alpha_1}{\alpha_2}=\frac{(\alpha_2-\alpha_1)(1-\alpha_1-\alpha_2)}{(1-\alpha_1)\alpha_2}>0.
\end{align}
Since $\frac{1-\alpha_2}{1-\alpha_1}-\frac{\alpha_1}{\alpha_2}>\frac{1}{2}$ and $h_b(p)$ is monotonically decreasing for $p\geq\frac{1}{2}$, we have that 
\begin{equation}
h_b\left( \frac{1-\alpha_2}{1-\alpha_1}\right)-h_b\left( \frac{\alpha_1}{\alpha_2}\right)<0<h_b(\alpha_2)-h_b(\alpha_1).
\end{equation}
Now consider the case where $\frac{\alpha_1}{\alpha_2}\geq\frac{1}{2}$. We notice that:
\begin{equation}
h_b\left( \frac{1-\alpha_2}{1-\alpha_1}\right)=h_b\left(1- \frac{1-\alpha_2}{1-\alpha_1}\right)=h_b\left( \frac{\alpha_2-\alpha_1}{1-\alpha_1}\right)
\end{equation}
where $\frac{\alpha_2-\alpha_1}{1-\alpha_1}<\frac{1}{2}.$
In addition, 
\begin{equation}
\frac{\alpha_2-\alpha_1}{1-\alpha_1}-\frac{\alpha_1}{\alpha_2}=\frac{(\alpha_2-\alpha_1)^2+\alpha_1(1-\alpha_2)}{(1-\alpha_1)\alpha_2}>0.
\end{equation}
Therefore, we would like to show that 
$$h_b(\alpha_2)-h_b(\alpha_1)>h_b\left( \frac{\alpha_2-\alpha_1}{1-\alpha_1}\right)-h_b\left( \frac{\alpha_1}{\alpha21}\right)$$
where all the binary entropy arguments are smaller than $\frac{1}{2}$ and both sides of the inequality are non-negative. In order to prove this inequality we remember that $h_b(p)$ is monotonically increasing with a decreasing slope, $\frac{\partial}{\partial p}h_b(p)=\log\frac{1-p}{p}$, for $p<\frac{1}{2}$. Then, it is enough to show that $\alpha_1<\frac{\alpha_1}{\alpha_2}$ (immediate result) and 
$$\alpha_2-\alpha_1>\frac{\alpha_2-\alpha_1}{1-\alpha_1}- \frac{\alpha_1}{\alpha_2}.$$
Looking at the difference between the two sides of the inequality we obtain:
\begin{align}
\frac{\alpha_2-\alpha_1}{1-\alpha_1}- \frac{\alpha_1}{\alpha_2}-(\alpha_2-\alpha_1)=&
(\alpha_2-\alpha_1)\frac{\alpha_1}{1-\alpha_1}-\frac{\alpha_1}{\alpha_2}<\frac{1}{2}(1-\alpha_1)\frac{\alpha_1}{1-\alpha_1}-\frac{\alpha_1}{\alpha_2}=\alpha_1\left(\frac{\alpha_2-2}{2\alpha_2} \right)<0
\end{align}
where the inequality follows from $\frac{\alpha_2-\alpha_1}{1-\alpha_1}<\frac{1}{2}$, leading to $\alpha_2-\alpha_1<\frac{1}{2}(1-\alpha_1)$.
\end{proof}

\section*{Appendix C}
We begin with the following proposition:
\begin{proposition}
\label{appendix3_prop1}
Let $X \sim \text{multnom}\left(\gamma_1\dots, \gamma_N\right)$, where $\gamma_i$ are parameters bounded from above such that $\gamma_i \leq b_i$ for all $i=1,\dots,N$ and $\sum b_i \geq 1$ (otherwise a valid solution is infeasible). Further assume $b_i \leq b_{i+1}$ for all $i$. Then,
\begin{enumerate} 
\item $H(X)$ is minimal only if $\gamma_N=b_N$.
\item The minimal entropy is achieved iif there exists $k>0$ such that $\gamma_i=b_i$ for every $i>k$  and $H\left( \frac{\gamma_1}{1-\lambda},\dots,\frac{\gamma_k}{1-\lambda}\right)=0$ where $\lambda=\sum_{i=k+1}^N\gamma_i$.
\end{enumerate}
\end{proposition}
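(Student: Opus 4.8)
The plan is to treat this as a concave minimization over the polytope $\mathcal{P}=\{\gamma:0\le\gamma_i\le b_i,\ \sum_i\gamma_i=1\}$ and to exploit the fact that $H(\gamma)=-\sum_i\gamma_i\log\gamma_i$ is Schur-concave. First I would isolate the single analytic fact that drives everything: since $\phi(t)=-t\log t$ is strictly concave, transferring a small mass $\epsilon$ from a lighter atom $\gamma_j$ to a heavier atom $\gamma_i\ge\gamma_j$ (keeping the total fixed) changes the entropy by $\epsilon\log(\gamma_j/\gamma_i)\le 0$, strictly negative when $\gamma_j<\gamma_i$; that is, concentrating mass never increases $H$. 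Since $H$ is concave and $\mathcal P$ is a compact polytope, a minimizer exists and may be taken at a vertex, where at most one coordinate lies strictly between $0$ and its cap $b_i$ (the constraint $\sum_i\gamma_i=1$ removes one degree of freedom, forcing all but one box constraint to be tight).

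For part (1) I would argue by this transfer step. Suppose a minimizer has $\gamma_N<b_N$. If some lighter atom $\gamma_j<\gamma_N$ carries positive mass, moving it into coordinate $N$ strictly decreases $H$, a contradiction. If instead $N$ is the lightest atom, I would first swap its value with that of the heaviest saturated atom $\gamma_m=b_m\le b_N$ (feasible since $b_N\ge b_m$, and entropy-preserving), making $N$ the heaviest atom, and then pour mass into it from any lighter atom, again strictly reducing $H$. Hence a minimizer with $\gamma_N<b_N$ can always be strictly improved unless $b_m=b_N$, so every minimizer has $\gamma_N=b_N$, up to permuting coordinates of equal cap.

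For part (2) the cleanest route is majorization. I would exhibit the explicit \emph{greedy-fill} vector $\gamma^\star$ that sets $\gamma_i=b_i$ for the top indices until the unit budget is exhausted and places the single remaining mass $1-\lambda$ on one lower-index coordinate; this is exactly the claimed structure, with $\lambda=\sum_{i>k}b_i$, and the condition $H(\gamma_1/(1-\lambda),\dots,\gamma_k/(1-\lambda))=0$ simply encodes that the leftover sits on a single atom. The key lemma is that $\gamma^\star$ majorizes every feasible $\gamma$: for each $j$, the sum of the $j$ largest entries of any feasible $\gamma$ is at most the sum of the $j$ largest caps (since $\gamma_i\le b_i$), and $\gamma^\star$ attains this bound for every $j$, so $\gamma^\star\succ\gamma$. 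Schur-concavity of $H$ then gives $H(\gamma^\star)\le H(\gamma)$ for all feasible $\gamma$, so $\gamma^\star$ is a global minimizer; strict Schur-concavity forces any minimizer to be majorization-equivalent to $\gamma^\star$, i.e.\ to have precisely the stated form. This also delivers the ``$\Leftarrow$'' direction, since any vector of that form is a permutation of $\gamma^\star$.

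I expect the main obstacle to be the tie handling and the exactness of the ``iff''. The inequality $H(\gamma^\star)\le H(\gamma)$ is routine once the partial-sum (majorization) bound is in hand, but pinning down the precise \emph{set} of minimizers — in particular ruling out a minimizer that splits the leftover across two low-index coordinates — requires the \emph{strict} Schur-concavity of $H$ together with careful bookkeeping when several caps $b_i$ coincide (so that ``$\gamma_N=b_N$'' and ``filled $=$ top caps'' hold only after relabeling). This no-splitting claim, which is exactly what makes $H$ of the normalized first-$k$ block vanish, is the one place where the strict concavity of $\phi$ must be invoked rather than merely its first-order behavior.
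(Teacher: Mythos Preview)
Your approach is correct and genuinely different from the paper's. The paper proceeds inductively: it first handles the binary case $N=2$ by an explicit comparison of the two boundary vertices, then for general $N$ writes the chain-rule decomposition
\[
H(X)=H_b(\lambda)+\lambda\,H\!\left(\tfrac{\gamma_i}{\lambda},\tfrac{\gamma_N}{\lambda}\right)+(1-\lambda)\,H\!\left(\tfrac{\gamma_1}{1-\lambda},\dots\right),\qquad \lambda=\gamma_i+\gamma_N,
\]
and uses the binary case to argue that any $\gamma_N<b_N$ can be improved; saturating $\gamma_N$ then leaves an $(N-1)$-dimensional instance, and the structure in part~(2) drops out of this recursion. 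Your route is global: you build the greedy-fill vector $\gamma^\star$ once and show it majorizes every feasible $\gamma$, so Schur-concavity of $H$ gives optimality in one stroke and strict Schur-concavity pins down the minimizer set. This is cleaner and avoids the case-by-case entropy comparisons; the paper's argument, by contrast, is more elementary in that it never names majorization, at the cost of more bookkeeping.

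Two small points to tighten. First, your sentence ``$\gamma^\star$ attains this bound for every $j$'' is not quite right: for $j>N-k$ the top-$j$ partial sum of $\gamma^\star$ equals $1$, not $\sum_{l=1}^j b_{(l)}$. The majorization inequality still holds because any feasible $\gamma$ also has top-$j$ sum at most $1$, so just phrase the bound as $\min\bigl(1,\sum_{l\le j} b_{(l)}\bigr)$ and note $\gamma^\star$ meets it. Second, your direct ``swap then pour'' argument for part~(1) is looser than it needs to be (after the swap the new $\gamma_N$ may again be the heaviest atom with no strictly lighter positive atom to drain); since your part~(2) already delivers part~(1) as an immediate corollary---the only feasible permutations of $\gamma^\star$ place the value $b_N$ at a coordinate with cap $\ge b_N$---you can simply invoke that instead.
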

\begin{proof}

First we consider the binary case where $N=2$. Here, the boundary conditions suggest $\gamma_1\leq b_1$ and $\gamma_2\leq b_2$. 
Moreover since $\gamma_2=1-\gamma_1$ we have that $1-b_2 \leq \gamma_1 \leq b_1$.
The minimal entropy is achieved on one of the boundaries,  $\gamma_1=b_1$ or $\gamma_2=b_2$. The resulting entropy is either $H_1 (\gamma_1=b_1 )=H_b (b_1)$  or $ H_2 (\gamma_2=b_2 )=H_b (b_2 )$.
\begin{itemize}
\item For $b_1>\frac{1}{2}$,  we have $H_b (b_2)\leq H_b (b_1)$ since $H_b(X)$ is monotonically decreasing for $X>\frac{1}{2}$.
\item For $b_1<\frac{1}{2}$ and $b_2>\frac{1}{2}$ we have $H_b(b_2)=H_b(1-b_2)\leq H_b(b_1)$ since $1-b_2\leq b_1<\frac{1}{2}$ and $H_b (X)$ is monotonically increasing for $X<\frac{1}{2}$.
\item Assuming $b_1<\frac{1}{2}$ and $b_2<\frac{1}{2}$ we have no feasible solution since $\sum b_i \geq 1$. 
\end{itemize}
Therefore, the minimal entropy is always achieved when $\gamma_2$ is maximal.
Now, consider the case where $N>2$.
Assume $\gamma_N<b_N$ achieves the minimal entropy.
The entropy of $X$ can be written as:
$$H(X)=H_b(\lambda)+\lambda H(\frac{\gamma_i}{\lambda},\frac{\gamma_N}{\lambda})+(1-\lambda)H\left(\frac{\gamma_1}{1-\lambda},\dots, \frac{\gamma_{i-1}}{1-\lambda},\frac{\gamma_{i+1}}{1-\lambda},\dots, \frac{\gamma_{N-1}}{1-\lambda}\right)$$
for any $\gamma_i>0$ and $\lambda=\gamma_i+\gamma_N$.
Looking at $H(\frac{\gamma_i}{\lambda},\frac{\gamma_N}{\lambda})$ we notice we have a constrained two parameters problem (as in the binary case above).
Assuming $b_N \geq \gamma_i+\gamma_N$, the minimum of $H(\frac{\gamma_i}{\lambda},\frac{\gamma_N}{\lambda})$ is  zero since it is a non-constrained minimization problem, and $\tilde{\gamma}_N=\gamma_N+\gamma_i$ achieves lower entropy which contradicts the minimal entropy achieved by $\gamma_N$. See the left chart of Figure \ref{appendix_fig} for example. 

Assuming $b_N<\gamma_i+\gamma_N$,  the minimum of $H(\frac{\gamma_i}{\lambda},\frac{\gamma_N}{\lambda})$ is achieved when $\gamma_N$ is maximal, $\tilde{\gamma}_N=b_N$, as we saw for the $N=2$ case. This again contradicts the minimal entropy achieved by $\gamma_N$. See the right chart of Figure \ref{appendix_fig} for example.

\begin{figure}[h]
\centering
\includegraphics[width = 0.5\textwidth,bb= 120 450 490 640,clip]{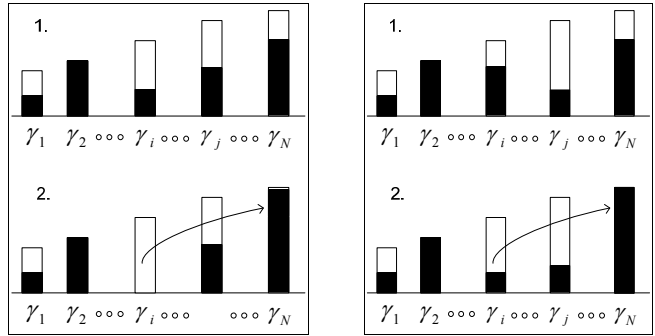}
\caption{Illustrations for entropy maximization in Proposition \ref{appendix3_prop1} }
\label{appendix_fig}
\end{figure}

Therefore, the minimal entropy is achieved only if $\gamma_N=b_N$. Further, setting $\gamma_N$ at its bound, $\gamma_N=b_N$, is a necessary condition for the minimality of $H(X)$. However, now we are facing an $N-1$ parameters entropy minimization problem, for which we also have the same necessary condition of setting $\gamma_{N-1}$ to its upper bound, if possible. Continuing this way, we eventually face a situation in which: 	
\begin{itemize}
\item The upper bound $\gamma_k$ is greater then the rest of the probability not allocated to $\gamma_i>k$. In this case, the minimal entropy of $H\left(\frac{\gamma_1}{1-\lambda},\dots, \frac{\gamma_{i-1}}{1-\lambda},\frac{\gamma_{i+1}}{1-\lambda},\dots, \frac{\gamma_{N-1}}{1-\lambda}\right)$ is just zero, by allocating $1-\lambda$ to a single parameter, $\gamma_k$. 
\item The parameter $\gamma_3$ is set to its upper bound which leaves us with a binary  entropy minimization problem, for which we found the necessary and sufficient condition - we set $\gamma_2$ to its maximal bound.
 \hfill $\square$
\end{itemize}
We now continue to prove Theorem \ref{theorem_lb}.
Following the steps of the previous proof, we start by showing that a necessary condition for minimal entropy is that all lower bounds are satisfied and $\gamma_N$ is at a feasible maxima, $\gamma_N=\min \{ b_N, 1-\sum_{i\neq N}a_i \}$.
We start with the binary case. Here, the boundary conditions suggest $a_1\leq \gamma_1 \leq b_1$ and $a_2 \leq \gamma_2 \leq b_2$. Let us first assume there are no upper bounds (or that the bounds are not forcing any restrictions).
In this case we decide between the two vertices,  $H(\gamma_1=a_1 )$ and $H(\gamma_2=a_2 )$. 
Comparing the two we have:
\begin{itemize}
\item Assuming $a_1>\frac{1}{2}$ we have no feasible solution since $\sum_i a_i>1$.
\item Assuming $a_1<\frac{1}{2}$ and $a_2>\frac{1}{2}$ we have $H_b (a_1 )>H_b (1-a_2)$ iff $a_1>1-a_2$ which is infeasible. Therefore, $H_b (a_1 )\leq H_b (1-a_2 )=H_b (a_2 )$.
\item Assuming $a_1<\frac{1}{2}$ and $a_2<\frac{1}{2}$ we have $H_b (a_1 )\leq H_b (a_2 )$.
\end{itemize}
Therefore, we conclude that  $H(\gamma_1=a_1 ) \leq H(\gamma_2=a_2 )$.
Taking the upper bounds into account we notice that if $b_2>1-a_1$ we may set $\gamma_1=a_1$ and achieve minimal entropy. However, in the case where $b_2<1-a_1$ we cannot set $\gamma_1=a_1$ as it results in a non-valid solution. 
Notice that in this case $\gamma_2=b_2$ is a valid solution since it yields $\gamma_1=1-b_2>a_1$.
Therefore, we compare the other three vertices $H(\gamma_2=a_2 ), H(\gamma_2=b_2 )$ and $H(\gamma_1=b_1 )$.
Comparing  $H(\gamma_2=a_2 )$ with $H(\gamma_2=b_2 )$:

\begin{itemize}
\item Assuming $a_2>\frac{1}{2}$ we have $H(\gamma_2=a_2 )>H(\gamma_2=b_2 )$ since $b_2>a_2 >\frac{1}{2}$.
\item Assuming $a_2<\frac{1}{2}$ we have $H(\gamma_2=a_2 )<H(\gamma_2=b_2 )$ iff  $\frac{1}{2}-a_2>b_2-\frac{1}{2}$ which leads to $b_2<1-a_2$. However, this results in $b_1<1-a_2$ as $b_2<b_1$. Setting $\gamma_2=a_2$ yields $\gamma_1=1-a_2>b_1$ which is not a valid solution.
\end{itemize}
Therefore, $H(\gamma_2=a_2 ) > H(\gamma_2=b_2 )$.

Finally, comparing $H(\gamma_2=b_2)$  with $H(\gamma_1=b_1 )$ results in $H(\gamma_2=b_2 )  < H(\gamma_1=b_1 )$ as we saw in proposition \ref{appendix3_prop1}, assuming $\gamma_2=b_2$ is a valid solution (as shown above).

To conclude, if $b_2>1-a_1$ then the minimal entropy is $H(a_1 )$, achieved at $\gamma_1=a_1$. Otherwise, we set $\gamma_2=b_2$ to achieve a minimal entropy of $H(b_2 )$.

In terms of $\gamma_2$ we have:
\begin{itemize}
\item If $b_2>1-a_1$ we set $\gamma_2=1-a_1$.
\item If $b_2<1-a_1$ we set $\gamma_2=b_2$. 
\end{itemize}
Therefore, we have that $\gamma_2=\min\{b_2,1-a_1\}$.

Let us now consider the case where N>2. Following the steps of Proposition \ref{appendix3_prop1} we notice that we can always decrease the entropy by looking at $\gamma_N$ and choosing $\gamma_i$ such that their binary entropy is not minimal according to the $N=2$ case. Repeating this process results in either $\gamma_N$ cannot be increased anymore (as it achieved its upper bound), or there are no $\gamma_i$ to decrease any further. Therefore, we have $\gamma_N=\min\{b_N,1-\sum_{i \neq N} a_i \}$.

Furthermore, we claim that the minimal entropy is  achieved iif there exists $k>0$ such that 
\begin{itemize}
\item $\gamma_i=b_i$ for all $i>k$.  
\item $\gamma_i=a_i$ for all $i<k$. 
\item $\gamma_k=1-\sum_{i\neq k}\gamma_i$.
\end{itemize}

This is a direct results from the fact that $H(X)$ is minimal only if $\gamma_N=\min\{b_N,1-\sum_{i\neq N} a_i\}$. Therefore, setting $\gamma_N$ at its bound is a necessary condition for the minimality of $H(X)$. 
However, now we are facing an $N-1$ parameters entropy minimization problem, for which we also have the same necessary condition of setting $\gamma_{N-1}$ to its upper bound, if possible. Continuing this way, we eventually face a situation in which:

\begin{itemize}
\item The upper bound of $\gamma_k$ is greater than the rest of the probability not allocated yet.  In this case, the minimal entropy of the $H\left(\frac{\gamma_1}{1-\lambda},\dots, \frac{\gamma_{i-1}}{1-\lambda},\frac{\gamma_{i+1}}{1-\lambda},\dots, \frac{\gamma_{N-1}}{1-\lambda}\right)$ is minimized by setting $\gamma_k$ at its highest, $1-\sum_{i \neq k}\gamma_i$  and the rest of the probabilities at their lowest (since $\gamma_k$ has the highest lower bound of them all).
\item The parameter $\gamma_3$ is set to its upper bound which leaves us with a binary  entropy minimization problem, for which we found the necessary and sufficient condition - $\gamma_2=\min\{b_2,1-a_1\}$. 
\end{itemize}\hfill $\square$
\end{proof}

\section*{Acknowledgment}
We acknowledge Ofer Shayevitz for the discussions that led to the formulation of the innovation representation problem.

\bibliographystyle{IEEEtran}
\bibliography{refs}

\begin{thebibliography}{10}
\providecommand{\url}[1]{#1}
\csname url@samestyle\endcsname
\providecommand{\newblock}{\relax}
\providecommand{\bibinfo}[2]{#2}
\providecommand{\BIBentrySTDinterwordspacing}{\spaceskip=0pt\relax}
\providecommand{\BIBentryALTinterwordstretchfactor}{4}
\providecommand{\BIBentryALTinterwordspacing}{\spaceskip=\fontdimen2\font plus
\BIBentryALTinterwordstretchfactor\fontdimen3\font minus
  \fontdimen4\font\relax}
\providecommand{\BIBforeignlanguage}[2]{{%
\expandafter\ifx\csname l@#1\endcsname\relax
\typeout{** WARNING: IEEEtran.bst: No hyphenation pattern has been}%
\typeout{** loaded for the language `#1'. Using the pattern for}%
\typeout{** the default language instead.}%
\else
\language=\csname l@#1\endcsname
\fi
#2}}
\providecommand{\BIBdecl}{\relax}
\BIBdecl

\bibitem{painsky2013memoryless}
A.~Painsky, S.~Rosset, and M.~Feder, ``Memoryless representation of markov
  processes,'' in \emph{IEEE International Symposium on Information Theory
  Proceedings (ISIT)}, 2013, pp. 2294--298.

\bibitem{arfken1985gram}
G.~Arfken \emph{et~al.}, ``Gram-schmidt orthogonalization,'' \emph{Mathematical
  methods for physicists}, vol.~3, pp. 516--520, 1985.

\bibitem{jolliffe2002principal}
I.~Jolliffe, \emph{Principal component analysis}.\hskip 1em plus 0.5em minus
  0.4em\relax Wiley Online Library, 2002.

\bibitem{hyvarinen1998independent}
A.~Hyv{\"a}rinen, ``Independent component analysis for time-dependent
  stochastic processes,'' in \emph{ICANN 98}.\hskip 1em plus 0.5em minus
  0.4em\relax Springer, 1998, pp. 135--140.

\bibitem{painsky2015generalized}
A.~Painsky, S.~Rosset, and M.~Feder, ``Generalized independent component
  analysis over finite alphabets,'' \emph{IEEE Transactions on Information
  Theory}, vol.~62, no.~2, pp. 1038--1053, 2016.

\bibitem{painsky2018linear}
A.~Painsky, S.~Rosset, and M.~G. Feder, ``Linear independent component analysis
  over finite fields: Algorithms and bounds,'' \emph{IEEE Transactions on
  Signal Processing}, 2018.

\bibitem{shayevitz2011optimal}
O.~Shayevitz and M.~Feder, ``Optimal feedback communication via posterior
  matching,'' \emph{IEEE Transactions on Information Theory}, vol.~57, no.~3,
  pp. 1186--1222, 2011.

\bibitem{kocaoglu2017entropic}
M.~Kocaoglu, A.~G. Dimakis, S.~Vishwanath, and B.~Hassibi, ``Entropic causal
  inference.'' in \emph{AAAI}, 2017, pp. 1156--1162.

\bibitem{painsky2018universality}
A.~Painsky and G.~W. Wornell, ``On the universality of the logistic loss
  function,'' \emph{arXiv preprint arXiv:1805.03804}, 2016.

\bibitem{painsky2018bregman}
------, ``Bregman divergence bounds and the universality of the logarithmic
  loss,'' \emph{arXiv preprint arXiv:1810.07014}, 2018.

\bibitem{kovacevic2012hardness}
M.~Kovacevic, I.~Stanojevic, and V.~Senk, ``On the hardness of entropy
  minimization and related problems,'' in \emph{Information Theory Workshop
  (ITW)}, 2012, pp. 512--516.

\bibitem{kuno2007simplicial}
T.~Kuno, Y.~Shiguro \emph{et~al.}, ``A simplicial algorithm for concave
  minimization and its performance as a heuristic tool,'' \emph{Technical
  Report of Department of Computer Science}, pp. 1--18, 2007.

\bibitem{painsky2018phd}
A.~Painsky, ``Phd dissertation: Generalized independent components analysis
  over finite alphabets,'' \emph{arXiv preprint arXiv:1809.05043}, 2016.

\bibitem{floudas1995nonlinear}
C.~A. Floudas, \emph{Nonlinear and mixed-integer optimization: fundamentals and
  applications}.\hskip 1em plus 0.5em minus 0.4em\relax Oxford University Press
  on Demand, 1995.

\bibitem{tawarmalani2004global}
M.~Tawarmalani and N.~V. Sahinidis, ``Global optimization of mixed-integer
  nonlinear programs: A theoretical and computational study,''
  \emph{Mathematical programming}, vol.~99, no.~3, pp. 563--591, 2004.

\bibitem{shanmugam2015learning}
K.~Shanmugam, M.~Kocaoglu, A.~G. Dimakis, and S.~Vishwanath, ``Learning causal
  graphs with small interventions,'' in \emph{Advances in Neural Information
  Processing Systems}, 2015, pp. 3195--3203.

\bibitem{peters2011causal}
J.~Peters, D.~Janzing, and B.~Scholkopf, ``Causal inference on discrete data
  using additive noise models,'' \emph{IEEE Transactions on Pattern Analysis
  and Machine Intelligence}, vol.~33, no.~12, pp. 2436--2450, 2011.

\bibitem{shimizu2006linear}
S.~Shimizu, P.~O. Hoyer, A.~Hyv{\"a}rinen, and A.~Kerminen, ``A linear
  non-gaussian acyclic model for causal discovery,'' \emph{Journal of Machine
  Learning Research}, vol.~7, no. Oct, pp. 2003--2030, 2006.

\bibitem{hoyer2009nonlinear}
P.~O. Hoyer, D.~Janzing, J.~M. Mooij, J.~Peters, and B.~Sch{\"o}lkopf,
  ``Nonlinear causal discovery with additive noise models,'' in \emph{Advances
  in neural information processing systems}, 2009, pp. 689--696.

\bibitem{kocaoglu2017entropic2}
M.~Kocaoglu, A.~G. Dimakis, S.~Vishwanath, and B.~Hassibi, ``Entropic causality
  and greedy minimum entropy coupling,'' \emph{arXiv preprint
  arXiv:1701.08254}, 2017.

\bibitem{monge1781memoire}
G.~Monge, \emph{M{\'e}moire sur la th{\'e}orie des d{\'e}blais et des
  remblais}.\hskip 1em plus 0.5em minus 0.4em\relax De l'Imprimerie Royale,
  1781.

\bibitem{kantorovich1942translocation}
L.~V. Kantorovich, ``On the translocation of masses,'' in \emph{Dokl. Akad.
  Nauk SSSR}, vol.~37, 1942, pp. 199--201.

\bibitem{kantorovich2006problem}
------, ``On a problem of monge,'' \emph{Journal of Mathematical Sciences},
  vol. 133, no.~4, pp. 1383--1383, 2006.

\bibitem{brenier1987polar}
Y.~Brenier, ``Polar decomposition and increasing rearrangement of vector
  fields,'' \emph{Comptes Rendus de Lacademic Des Sciences Serie
  I-Mathematique}, vol. 305, no.~19, pp. 805--808, 1987.

\bibitem{pass2011uniqueness}
B.~Pass, ``Uniqueness and monge solutions in the multimarginal optimal
  transportation problem,'' \emph{SIAM Journal on Mathematical Analysis},
  vol.~43, no.~6, pp. 2758--2775, 2011.

\bibitem{pass2012local}
------, ``On the local structure of optimal measures in the multi-marginal
  optimal transportation problem,'' \emph{Calculus of Variations and Partial
  Differential Equations}, vol.~43, no. 3-4, pp. 529--536, 2012.

\bibitem{pass2013class}
------, ``On a class of optimal transportation problems with infinitely many
  marginals,'' \emph{SIAM Journal on Mathematical Analysis}, vol.~45, no.~4,
  pp. 2557--2575, 2013.

\bibitem{painsky2014generalized}
A.~Painsky, S.~Rosset, and M.~Feder, ``Generalized {B}inary {I}ndependent
  {C}omponent {A}nalysis,'' in \emph{IEEE International Symposium on
  Information Theory (ISIT)}, 2014, pp. 1326--1330.

\bibitem{painsky2016Binary}
------, ``Binary independent component analysis: Theory, bounds and
  algorithms,'' in \emph{Machine Learning for Signal Processing Conference
  (MLSP)}, 2016.

\bibitem{painsky2015Universal}
------, ``Universal compression of memoryless sources over large alphabets via
  independent component analysis,'' in \emph{Data Compression Conference
  (DCC)}, 2015, pp. 213--222.

\bibitem{painsky2016Simple}
------, ``A simple and efficient approach for adaptive entropy coding over
  large alphabets,'' in \emph{Data Compression Conference (DCC)}, 2016.

\bibitem{painsky2016largetrans}
------, ``Large alphabet source coding using independent component analysis,''
  \emph{IEEE Transactions on Information Theory}, vol.~63, no.~10, pp.
  6514--6529, 2017.

\bibitem{painsky2016compressing}
A.~Painsky and S.~Rosset, ``Compressing random forests,'' in \emph{Data Mining
  (ICDM), 2016 IEEE 16th International Conference on}.\hskip 1em plus 0.5em
  minus 0.4em\relax IEEE, 2016, pp. 1131--1136.

\bibitem{painsky2018lossless}
------, ``Lossless (and lossy) compression of random forests,'' \emph{arXiv
  preprint arXiv:1810.11197}, 2018.

\end{thebibliography}

\begin{IEEEbiographynophoto}{Amichai Painsky}  (S’12–M’18)
received his B.Sc. in Electrical Engineering from Tel Aviv University (2007), his M.Eng. degree in Electrical Engineering from Princeton University (2009) and his Ph.D. in Statistics from the  School of Mathematical Sciences in Tel Aviv University. He is currently a Post-Doctoral Fellow, co-affiliated with the
Israeli Center of Research Excellence in Algorithms (I-CORE) at
the Hebrew University of Jerusalem, and the Signals, Information and Algorithms
(SIA) Lab at MIT. His research interests include Data Mining, Machine Learning, Statistical Learning and their connection to Information Theory.
\end{IEEEbiographynophoto}

\begin{IEEEbiographynophoto}{Saharon Rosset} is a Professor in the department of Statistics and Operations Research at Tel Aviv University. His research interests are in Computational Biology and Statistical Genetics, Data Mining and Statistical Learning. Prior to his tenure at Tel Aviv, he received his PhD from Stanford University in 2003 and spent four years as a Research Staff Member at IBM Research in New York. He is a five-time winner of major data mining competitions, including KDD Cup (four times) and INFORMS Data Mining Challenge, and two time winner of the best paper award at KDD (ACM SIGKDD International Conference on Knowledge Discovery and Data Mining)
\end{IEEEbiographynophoto}

\begin{IEEEbiographynophoto}{Meir Feder} (S'81-M'87-SM'93-F'99) received the B.Sc and M.Sc degrees
from Tel-Aviv University, Israel and the Sc.D degree from the Massachusetts
Institute of Technology (MIT) Cambridge, and the Woods Hole
Oceanographic Institution, Woods Hole, MA, all in electrical
engineering in 1980, 1984 and 1987, respectively.

After being a research associate and lecturer in MIT he joined
the Department of Electrical Engineering - Systems, School of Electrical Engineering, Tel-Aviv
University, where he is now a Professor and the incumbent of the Information Theory Chair.
He had visiting appointments at the Woods Hole Oceanographic Institution, Scripps
Institute, Bell laboratories and has been a visiting
professor at MIT. He is also extensively involved in the high-tech
industry as an entrepreneur and angel investor.
He co-founded several companies including Peach Networks,
a developer of a server-based interactive TV solution which
was acquired by Microsoft, and Amimon a
provider of ASIC's for wireless high-definition A/V connectivity.
Prof. Feder is a co-recipient of the 1993 IEEE Information Theory
Best Paper Award. He also received the 1978 "creative thinking"
award of the Israeli Defense Forces, the 1994 Tel-Aviv University
prize for Excellent Young Scientists, the 1995 Research Prize of
the Israeli Electronic Industry, and the research prize in applied
electronics of the Ex-Serviceman Association, London, awarded by
Ben-Gurion University.

\end{IEEEbiographynophoto}
\vfill

\end{document}